\Crefname{algocf}{Algorithm}{Algorithms}
\newcommand{\paref}[1]{(\ref{#1})}
\newcommand{\review}[1]{\textcolor{black}{#1}}
\newcommand{\Cov}{{\mathbb C}\text{ov}}
\newcommand{\mE}{\mathbb E}
\newcommand{\mEt}{\widetilde{\mE}}
\newcommand{\mEtk}{\widetilde{\mE}^{(k)}}
\newcommand{\mEtun}{\widetilde{\mE}^{^{(1)}}}
\newcommand{\mEtdeux}{\widetilde{\mE}^{^{(2)}}}
\newcommand{\Jun}{J^{(1)}}
\newcommand{\tJun}{\widetilde J^{(1)}}
\newcommand{\Jdeux}{J^{(2)}}
\newcommand{\tJdeux}{\widetilde J^{(2)}}
\newcommand{\pt}{\widetilde{p}}
\newcommand{\pta}{p_{\theta}}
\newcommand{\ptun}{\widetilde{p}^{(1)}}
\newcommand{\ptdeux}{\widetilde{p}^{(2)}}
\newcommand{\Pcal}{\mathcal{P}}
\newcommand{\ta}{\theta}
\newcommand{\Var}{\mathbb V}
\newcommand{\trace}{\operatorname{Tr}}
\newcommand{\matr}[1]{\boldsymbol{#1}}
\newcommand{\sumzero}{\sum_{i_0, j_0}}
\newcommand{\sumun}{\sum_{i_1, j_1}}
\newcommand{\tr}{^{\top}}
\newcommand{\logdet}[1]{\log\left|#1\right|}
\DeclareMathOperator*{\argmax}{arg\,max}
\DeclareMathOperator{\logit}{logit}
\DeclareMathOperator{\Diag}{Diag} 
\DeclareMathOperator{\diag}{diag} 
\theoremstyle{thmstyleone}%
\newtheorem{theorem}{Theorem}
\newtheorem{proposition}[theorem]{Proposition}%
\theoremstyle{thmstyletwo}%
\newtheorem{remark}{Remark}%
\theoremstyle{thmstylethree}%
\begin{document}

\title[Zero-inflation in the MPLN Family]{Zero-inflation in the Multivariate Poisson Lognormal Family}


\author[1]{\fnm{Bastien}~\sur{Batardière}\orcid{0009-0001-3960-7120}}\email{bastien.batardiere@inrae.fr}
\author*[1]{\fnm{Julien}~\sur{Chiquet}\orcid{0000-0002-3629-3429}}\email{julien.chiquet@inrae.fr}
\author[2]{\fnm{François}~\sur{Gindraud}\orcid{0000-0002-9000-5707}}\email{francois.gindraud@inrae.fr}
\author[3]{\fnm{Mahendra}~\sur{Mariadassou}\orcid{0000-0003-2986-354X}}\email{mahendra.mariadassou@inrae.fr}

\affil*[1]{Université Paris-Saclay, AgroParisTech, INRAE, UMR MIA Paris-Saclay, 91120, Palaiseau, France.\orgdiv{Department}, \orgname{Université Paris-Saclay, AgroParisTech, INRAE, UMR MIA Paris-Saclay}, \postcode{91120}, \city{Palaiseau}, \country{France}}

\affil[2]{\orgname{Université Lyon 1, CNRS, Laboratoire de Biométrie et Biologie Evolutive UMR 5558}, \postcode{F-69622}, \city{Villeurbanne},  \country{France}}

\affil[3]{\orgname{Université Paris-Saclay, INRAE, MaIAGE}, \city{City}, \postcode{78350}, \city{Jouy-en-Josas}, \country{France}}

\abstract{Analyzing high-dimensional count data is a challenge and statistical model-based approaches provide an adequate and efficient framework that preserves explainability. The (multivariate) Poisson-Log-Normal (PLN) model
is one such model: it assumes count data are driven by an underlying structured latent Gaussian variable, so that the dependencies between counts solely stems from the latent dependencies. However PLN doesn't account for zero-inflation, a feature frequently observed in real-world datasets. Here we introduce the Zero-Inflated PLN (ZIPLN) model, adding a multivariate zero-inflated component to the model, as an additional Bernoulli latent variable. The Zero-Inflation can be fixed, site-specific, feature-specific or depends on covariates. We estimate model parameters using variational inference that scales up to datasets with a few thousands variables and compare two approximations: (i) independent Gaussian and Bernoulli variational distributions or (ii) Gaussian variational distribution conditioned on the Bernoulli one. The method is assessed on synthetic data and the efficiency of ZIPLN is established even when zero-inflation concerns up to $90\%$ of the observed counts. We then apply both ZIPLN and PLN to a cow microbiome dataset, containing $90.6\%$ of zeroes. Accounting for zero-inflation significantly increases log-likelihood and reduces dispersion in the latent space, thus leading to improved group discrimination.}

\keywords{Count data, Poisson lognormal model, Zero inflated model, Variational Inference, Alternate Optimisation}



\maketitle

\section{Introduction}
\label{sec:intro}


\review{
Count data are ubiquitous in fields such as ecology, accident analysis, 
single-cell RNA sequencing, and metagenomics. A common analytical goal is to 
estimate correlations between species abundances in a biome or gene expressions 
in a tissue. This work is motivated by microbiome studies, where a marker gene 
is sequenced using high throughput technologies and the sequences are processed 
to produce Operational Taxonomic Units (OTUs) or Amplicon Sequence Variants 
(ASVs), which serve as proxies for microbial species. Microbiome studies 
generate count tables characterized by a high fraction of zero counts (80–95\%) 
and a large number of variables.
}

\review{
Analyzing count data directly is challenging, and 
transformations are often required to extract meaningful statistics. While 
log-transformation followed by Gaussian analyses is widely used, it lacks 
statistical rigor \citep{NoLogTransform}. Model-based approaches, such as 
Negative-Binomial (NB) and Poisson models, are more appropriate and have been 
extensively applied in RNAseq studies \citep[see \emph{e.g.}][]{DEseq2}. The NB 
distribution, which combines a Poisson emission law with a Gamma-distributed 
parameter, is preferred to the standard Poisson for modeling overdispersion—a 
common feature in sequencing-based count data \citep[including scRNA-seq, 
see][]{overdispersion}. However, neither the NB nor the Poisson distribution can 
model the fraction of zero counts independently from the mean count.
}

\review{
Univariate Zero-Inflated Poisson (ZIP) models \citep{lambert1992} address zero 
inflation by incorporating a Dirac mass at zero, enabling efficient differential 
analyses but disregarding inter-variable dependencies. Extensions to 
Multivariate ZIP (MZIP) and Zero-Inflated Negative-Binomial (ZINB) models have 
been proposed, but their correlation structures remain constrained \citep{MZIP,MZINB}. A bivariate ZINB model has been used to measure  correlation between two genes in scRNAseq data but does not scale 
to higher order dependencies \citep{BZINB}.
}


\review{
The (multivariate) Poisson-Log-Normal \citep[in short PLN, see][]{Aitchison1989} 
model offers a flexible framework for multivariate count data, capturing 
dependencies between counts through latent Gaussian variable.
As a mixture of Poisson with Log-Normal distributed parameters, PLN models 
naturally results in overdispersion and, unlike the NB, correlation 
between variables. The PLN model falls in the family of latent variable models 
(LVMs), and more specifically of multivariate generalized linear mixed models
(mGLMMs) sometimes also called generalized linear latent variable models 
(GLLVMs). Parameter estimation for common GLLVMs is efficiently implemented in 
some packages \citep{gllvm,statsmodels}, making it 
a popular option for multivariate count data. However, while some models allows 
for dependency between variables or zero-inflation, to the best of our 
knowledge, only one \citep{Lee2020} accounts for both at the same time.
}


\review{
We consider here the Zero-Inflated Poisson Log-Normal (ZIPLN) model, based on
the PLN model and also used in \citet{Lee2020}. ZIPLN benefits from the Gaussian 
structure of the PLN model, with an extra zero-inflated component. This extra 
layer depends on parameters that can be shared across individuals, features, or 
linked to covariates. Exact inference of (ZI)PLN using the 
Expectation-Maximization (EM) algorithm \citep{DLR77} is intractable and we 
resort instead to variational inference 
\citep{JaJ00,WAJ08,hui2017variational,blei2017variational}. Other approaches 
based on Monte Carlo
techniques \citep{JACQUIER2007615,Cappe_mcmc,stoehr2024composite,Lee2020} have 
been proposed to provide estimators, but they do no scale
well with the dimension of the observations and cannot address medium to high 
dimension problems. Numerical integration can be performed \citep{Aitchison1989} 
as an alternative to the variational approximation used here but becomes
prohibitive when the number of dimensions exceeds $5$. We introduce in this 
work two Variational-EM algorithms based on two different approximations. The 
first assumes conditional independence between both 
components, resulting in a fast M step. By contrast, the second is slightly 
slower but leverages the dependence between components to use a more complex 
variational approximation.
}

\paragraph*{Related works}

\review{
Our work is related to ZINBWaVE, which models zero inflation and counts using 
latent factors, assumed to capture (unobserved) sample-level covariates, but 
lacks identifiability and a clear dependency structure \citep{risso2018} as 
it's mainly interested in estimating the probability that a null count arises 
from zero-inflation. The closest model is the ZIPLN of \citet{Lee2020}, which 
uses a Bayesian MCMC framework. Our work differs from theirs in two ways: the 
modeling of the zero inflation component (logit versus probit) and more 
importantly parameter estimation (fully Bayesian framework versus variational 
approximation). The Bayesian estimation relies on MCMC techniques which become 
quite slow for medium to high dimensional problems (several minutes for p $\sim$ 
50 and n $\sim$ 250). A critical advantage of variational inference is its 
scalability to large problems (a couple of seconds for similarly sized 
problems), which is our main motivation for developing and implementing a fully 
variational approach. The trade-off is access to a measure of uncertainty on the 
estimation of parameters, which is more direct in the MCMC case, even though 
alternatives do exists for a variational approach \citep{SandwichBatardiere}.
}

Recently, deep neural networks have been proposed to model count
data. Variational autoencoders (VAE) \citep{kingma2022autoencoding} are
particularly efficient, performing dimension reduction via a latent variable
framework. \cite{VAE_count} proposes a VAE to
model high dimensional overdispersed count data based on the NB distribution
and \cite{Jin_2020} models sparse and imbalanced count data with VAE. Many VAE
models have been proposed for the sole purpose of studying scRNA-seq data (see
\emph{e.g.} \citet{Lopez2018DeepGM,Choi2021.09.15.460498,XU2023100382,WANG2018320}). Although model-based and highly effective for predictions, VAE remain significantly harder to interpret in terms of coefficients, outputs and impact of structuring factors of interest than their statistical latent variable models counterparts.
\\

Our paper is organized as follows: in \Cref{sec:model}, we introduce the PLN
model followed by the ZIPLN model. In \Cref{sec:var-prox}, we discuss the
variational inference and choices made on the variational families. In
\Cref{sec:inference}, we discuss the optimization strategy. We study the model
performances on simulated data in \Cref{sec:sim-study}, and apply it to a cow microbiome dataset in \Cref{sec:application}. We conclude in \Cref{sec:discussion} with
discussions and possible improvements.

\section{Model}  \label{sec:model}

\paragraph*{Background: Multivariate Poisson lognormal-model} The multivariate
Poisson lognormal model relates a $p$-dimensional observation count vector
$\matr{Y}_i = (Y_{i1}, \dots, Y_{ip})\in\mathbb{N}^p$ to a $p$-dimensional
vector of Gaussian latent variables $\matr{Z}_i\in\mathbb{N}^p$ with precision
matrix $\matr{\Omega}$ (that is, covariance matrix $\matr{\Sigma}
\triangleq \matr{\Omega}^{-1}$). We adopt a formulation of PLN close to a
multivariate generalized linear model, where the main effect is due to a linear
combination of $d$ covariates $\matr{x}_i\in\mathbb{R}^d$ (including an intercept).  We also let the possibility to add some offsets for the $p$
variables in each sample, that is $\boldsymbol{o}_i\in\mathbb{R}^p$:
\begin{equation}
 \label{eq:PLN-model}
    \begin{array}{rcl}
  \text{latent space } &   \matr{Z}_i \sim  \mathcal{N}( \matr{x}_i^\top\matr{B},\matr{\Omega}^{-1})\\
  \text{obs. space } &   \boldsymbol{Y}_i | \boldsymbol{Z}_i\sim\mathcal{P}\left(\exp\{ \mathbf o_i + \boldsymbol{Z}_i\}\right).
  \end{array}
\end{equation}
The $d\times p$ matrix $\matr{B}$ is the latent matrix of regression parameters. The latent covariance matrix $\matr{\Sigma}$ describes the underlying residual structure of dependence between the $p$ variables, once the covariates are accounted for. We denote by $\matr{Y}, \matr{O}, \matr{X}$ the observed matrices with respective sizes $n\times p, n\times p$ and $n\times d$ obtained by stacking row-wise the vectors of counts, offsets and covariates (respectively $\matr{Y}_i,\matr{o}_i$  and $\matr{x}_i$ ). We also denote by $\matr{Z}$ the $n\times p$ matrix of unobserved latent Gaussian vectors $\matr{Z}_i$.

\paragraph*{Zero-inflated PLN regression model} We now aim to model an excess
of zeroes in the data by adding zero-inflation to the standard PLN model
\paref{eq:PLN-model}, so that the zeroes in $\boldsymbol{Y}_i$ arise from two
different  sources: either from a component where zero is the only possible
value, or from a standard PLN component like in \Cref{eq:PLN-model}. This
two-component mixture is defined thanks to an additional latent vector
$\boldsymbol{W}_{i} = (W_{i1,\dots,ip})\in\mathbb{R}^p$ of Bernoulli random
variables, parametrized by probabilities $\matr{\pi}_i = (\pi_{i1}, \dots,
\pi_{ip})$ describing the probability that variable $j$ in sample $i$ belongs
to the pure zero component:
\begin{equation}
 \label{eq:ZIPLN-model}
    \begin{array}{rl}
  \boldsymbol{Z}_i  = (Z_{ij})_{j=1\dots p} & \sim  \mathcal{N}(\matr{x}_i^\top \matr{B}, \matr{\Omega}^{-1}), \\[1.5ex]
  \boldsymbol{W}_{i} = (W_{ij})_{j=1\dots p} & \sim \otimes_{j=1}^p \mathcal B(\pi_{ij}),   \\[1.5ex]
  Y_{ij} \, | \, W_{ij}, Z_{ij} \sim W_{ij}\delta_0 & + (1-W_{ij})\Pcal\left(e^{o_{ij} + Z_{ij}}\right),\\
  \end{array}
\end{equation}
where $\delta_0$ is the Dirac distribution and we note $\matr{\pi}$ the matrix obtained by stacking the
vectors $\matr{\pi}_1^\top, \dots, \matr{\pi}_n^\top$. Our model is flexible enough to
accommodate different parametrizations for $\pi_{ij}$ based on the availability
of covariates and/or modeling choices made by the user. We consider three
variants: non-dependent (ND), column-wise dependence (CD) and row-wise dependence (RD)
\begin{subequations}
\begin{align}
    \pi_{ij} & = \pi \in [0,1] & \small{\text{(ND)}}
\label{eq:zi-models-nd} \\
\pi_{ij} & = \logit^{-1}( \boldsymbol X^{0}\boldsymbol B^0)_{ij}, &  \small{\text{(CD)}}
\label{eq:zi-models-cd}\\
\pi_{ij} & = \logit^{-1}(\widebar{\boldsymbol{B}}^0 \widebar{\boldsymbol{X}}^{0})_{ij}, & \small{\text{(RD)}}
\label{eq:zi-models-rd}
\end{align}
\end{subequations}
where $\logit^{-1}(\cdot)$ is the logistic (or inverse logit) function, $d_0 \geq 1$,
$\boldsymbol{B}^0\in\mathbb R^{d_0\times p}$ (resp. $\bar{\boldsymbol{B}}^0\in \mathbb R^{n \times d_0}$) are regression coefficients
associated with row-wise matrix of covariates $\boldsymbol {X}^0\in \mathbb R^{n \times d_0}$ (resp.
column-wise covariates $\bar{\boldsymbol{X}}^0\in \mathbb R^{d_0\times p}$), obtained by stacking the
vectors $(\matr{x}^0_1)^\top, \dots, (\matr{x}^0_n)^\top$, which may or
may not be the same as in $\matr{X}$, the matrix of covariates in the PLN
component. The ND inflation setting allows the zero-inflation component to be
shared along all individuals and variables, a simple but slightly unrealistic
assumption, while RD (resp. CD) allows zero-inflation to be shared along all
variables (resp. individuals) as it depends only on the individual (resp.
variable) covariates. CD is useful when some variables are prone to
zero-inflation across individuals (\emph{e.g.} taxa for which the marker gene
fails to amplify) whereas RD is useful when the set of
zero-inflated variables rather depends on the individual's characteristics
(\emph{e.g.} acidophile taxa in soils with high pH). Note that when the column-wise (resp. row-wise)
covariates reduce to a vector of one $\boldsymbol{1}_n$ (resp.
$\boldsymbol{1}_p^\top$), the model corresponds to an inflation towards zero
shared across rows (resp. columns) with vector of parameters $\matr{\pi} =
(\boldsymbol{\pi}_j)\in [0,1]^p$ (resp. $\matr{\pi} = (\boldsymbol{\pi}_i)\in
[0,1]^n $). We refer to Model \ref{eq:ZIPLN-model} as the ZIPLN regression
model.

Using standard results on Poisson and Gaussian distribution, we easily derive the expectation and variance of the ZIPLN regression model. Letting $A_{ij} \triangleq \exp \left( o_{ij} + \mu_{ij} + \sigma_{jj}/2\right)$ with $\mu_{ij} = \matr{x}_i^\intercal \matr{B}_j$, then
\begin{align*} \label{eq:moments_zipln}
\mE\left(Y_{ij}\right) & =  (1-\pi_{ij}) A_{ij}\geq  0,\\
\Var\left(Y_{ij}\right) & = \mE\left(Y_{ij}\right) + (1-\pi_{ij})A_{ij}^2 \left( e^{\sigma_{jj}} - (1-\pi_{ij}) \right).
\end{align*}

In the following, we are interested in inferring the vector of parameters
$\theta$  where $\theta = \left(\matr{\Omega}, \matr{B}, \matr{\pi}\right)$, $\theta
=\left(\matr{\Omega}, \matr{B}, \matr B ^0\right) $ and $\theta =
\left(\matr{\Omega}, \matr{B}, \bar{\matr{B}} ^0\right)$ for Models
\ref{eq:zi-models-nd}, \ref{eq:zi-models-cd} and \ref{eq:zi-models-rd}
respectively, and where  $ \matr \Omega \in \mathbb S^{++}, \matr B \in \mathbb R^{d\times
p}, \matr{\pi} \in \left[0,1\right], \matr B^0 \in \mathbb R^{d_0\times p}$ and
$\bar{\matr{B}}^0 \in \mathbb R^{n\times d_0}$ with $\mathbb{S}^{++}$
the set of $p\times p$ positive-definite matrices. We first show that Model \ref{eq:ZIPLN-model} is identifiable.

\paragraph*{Identifiability of ZIPLN models} Identifiability results are
available for the ZI Poisson model \citep{Li2012} and can be generalized to
the ZIPLN regression model. To this
end, we first consider the simple ZIPLN model, (\emph{i.e.} a ZIPLN model
without covariate),  with a single sample,
in order to drop the index $i$:
\begin{equation}
    \label{eq:ZIPLN-standard}
\begin{aligned}
\matr{W} = (W_j)_{j=1\dots p} & \sim \mathcal{B}^{\otimes}(\matr{\pi}) = (\pi_1) \otimes \dots \otimes(\pi_p) \\
\matr{Z} = (Z_j)_{j=1\dots p} & \sim \mathcal{N}_p(\matr{\mu}, \matr{\Omega}^{-1})  \\
Y_j | W_j, Z_j & \sim W_j \delta_0 + (1-W_j)\mathcal{P}(e^{Z_j}), \\ Y_j \perp Y_k  | \matr{W}, \matr{Z}
\end{aligned}
\end{equation}

\begin{restatable}{proposition}{identifiability} \label{prop:standard-identifiability}
The simple ZIPLN model defined in \paref{eq:ZIPLN-standard} with parameter
$\matr{\theta} = (\matr{\Omega}, \matr{\mu}, \matr{\pi})$ and parameter space
$\mathbb{S}_p^{++} \times  \mathbb{R}^p \times (0,1)^{p} $ is identifiable.
\end{restatable}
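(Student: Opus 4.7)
The plan is to identify the parameters in two stages, exploiting the fact that $\matr{\Sigma} = \matr{\Omega}^{-1}$ is determined by its diagonal and off-diagonal entries separately. First I would recover the marginal triples $(\pi_j, \mu_j, \sigma_{jj})$ from the univariate laws of each $Y_j$; then I would recover the off-diagonals $\sigma_{jk}$ from the bivariate cross-moments $\mathbb{E}[Y_j Y_k]$. Uniqueness of $\matr{\Omega}$ then follows from inverting $\matr{\Sigma}$.

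For the marginal step, observe that each $Y_j$ is itself a univariate ZIPLN with parameters $(\pi_j, \mu_j, \sigma_{jj})$. Conditioning on $(W_j, Z_j)$ and using $\mathbb{E}[\exp(k Z_j)] = \exp(k \mu_j + k^2 \sigma_{jj}/2)$, the factorial moments satisfy
\begin{equation*}
\mathbb{E}\bigl[Y_j (Y_j - 1) \cdots (Y_j - k + 1)\bigr] = (1 - \pi_j)\exp\bigl(k \mu_j + k^2 \sigma_{jj}/2\bigr).
\end{equation*}
Taking logarithms for $k = 1, 2, 3$ produces a $3 \times 3$ linear system in $\bigl(\log(1 - \pi_j), \mu_j, \sigma_{jj}\bigr)$ whose coefficient matrix is non-degenerate (a direct determinant check). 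Since $\pi_j \in (0, 1)$ guarantees that $\log(1 - \pi_j)$ is well-defined and that all factorial moments are finite (by log-normality of $e^{Z_j}$), the system admits a unique solution, identifying $(\pi_j, \mu_j, \sigma_{jj})$ for every $j$.

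For the off-diagonal step, conditional independence of $Y_j$ and $Y_k$ given $(\matr{W}, \matr{Z})$ together with independence of $W_j, W_k$ yields
\begin{equation*}
\mathbb{E}[Y_j Y_k] = (1 - \pi_j)(1 - \pi_k)\,\mathbb{E}\bigl[e^{Z_j + Z_k}\bigr] = (1 - \pi_j)(1 - \pi_k)\exp\bigl(\mu_j + \mu_k + \tfrac{1}{2}(\sigma_{jj} + \sigma_{kk}) + \sigma_{jk}\bigr).
\end{equation*}
All factors on the right-hand side other than $\sigma_{jk}$ are fixed by the marginal step, and $\mathbb{E}[Y_j Y_k]$ is a functional of the observed joint law, so $\sigma_{jk}$ is uniquely determined for every pair $j \neq k$. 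Collecting diagonal and off-diagonal entries recovers $\matr{\Sigma}$, and inversion yields $\matr{\Omega}$.

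The main obstacle is the marginal step: a priori, different triples $(\pi_j, \mu_j, \sigma_{jj})$ could produce the same univariate law, and the Dirac contamination at zero prevents using $P(Y_j = 0)$ cleanly. The factorial-moment argument bypasses this by working only with $k \geq 1$, but it crucially relies on the strict inequality $\pi_j \in (0, 1)$, which excludes the degenerate case $\pi_j = 1$ (where $Y_j \equiv 0$ and $(\mu_j, \sigma_{jj})$ would be entirely unidentifiable) and ensures that the factor $(1 - \pi_j)$ can be extracted via the logarithm.
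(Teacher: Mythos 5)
Your proposal is correct and follows essentially the same route as the paper: a method-of-moments argument that recovers $(\pi_j,\mu_j,\sigma_{jj})$ from the first three univariate moments and $\sigma_{jk}$ from pairwise second-order cross-moments, then inverts $\matr{\Sigma}$ to get $\matr{\Omega}$. Indeed, the combinations the paper manipulates, $\mE[Y_j^2]-\mE[Y_j]$ and $\mE[Y_j^3]-3\mE[Y_j^2]+2\mE[Y_j]$, are exactly your factorial moments for $k=2,3$, so your log-linear Vandermonde system is just a cleaner repackaging of the paper's explicit algebraic inversion.
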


The proof relies on the method of moments and is postponed to the appendix. We now use this result to prove identifiability of the ZIPLN regression Model \ref{eq:zi-models-cd} (proof for \ref{eq:zi-models-rd} is achieved by replacing $\matr{X}^0\matr{B}$ with $\widebar{\boldsymbol{B}}^0 \widebar{\boldsymbol{X}}$).

\begin{proposition} \label{prop:regression-identifiability}
The ZIPLN regression Model \ref{eq:ZIPLN-model} with zero-inflation defined
as in \Cref{eq:zi-models-cd} and parameter $\matr{\theta} = (\matr{\Omega},
\matr{B}, \matr{B}^0)$ and parameter space $\mathbb{S}^{++} \times
\mathcal{M}_{p,d}(\mathbb{R}) \times \mathcal{M}_{p,d}(\mathbb{R})$ is identifiable  if and
only if both $n\times d$ and $n \times d_0$ matrices of covariates $\matr{X}$ and $\matr X^0$ are full rank.
\end{proposition}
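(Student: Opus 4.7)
The plan is to reduce the identifiability of the regression model to a sample-wise application of \Cref{prop:standard-identifiability}, and then use elementary linear algebra to recover the regression parameters from the identified per-sample means and inflation probabilities.

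First I would note that, conditionally on $\matr{X}$ and $\matr{X}^0$, the samples $\matr{Y}_i$ are independent and each follows a simple ZIPLN law as in \Cref{eq:ZIPLN-standard} with parameters $(\matr{\Omega}, \matr{\mu}_i, \matr{\pi}_i)$, where $\matr{\mu}_i^\top = \matr{x}_i^\top \matr{B}$ and $\matr{\pi}_i = \logit^{-1}\bigl((\matr{X}^0 \matr{B}^0)_i\bigr)$. Since $\logit^{-1}$ takes values in $(0,1)$, each $\matr{\pi}_i$ lies in $(0,1)^p$, so \Cref{prop:standard-identifiability} applies and the triple $(\matr{\Omega}, \matr{\mu}_i, \matr{\pi}_i)$ is identifiable from the marginal law of $\matr{Y}_i$. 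The shared $\matr{\Omega}$ is already pinned down by any single sample, and stacking the $\matr{\mu}_i$ and $\logit(\matr{\pi}_i)$ rowwise shows that the $n\times p$ matrices $\matr{X}\matr{B}$ and $\matr{X}^0 \matr{B}^0$ are identifiable from the joint law of $(\matr{Y}_1,\dots,\matr{Y}_n)$.

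For the $(\Leftarrow)$ direction, assume $\matr{X}$ and $\matr{X}^0$ have full column rank. Then $\matr{X}^\top \matr{X}$ and $(\matr{X}^0)^\top \matr{X}^0$ are invertible, and I can write $\matr{B} = (\matr{X}^\top \matr{X})^{-1}\matr{X}^\top (\matr{X}\matr{B})$ and likewise $\matr{B}^0 = ((\matr{X}^0)^\top \matr{X}^0)^{-1}(\matr{X}^0)^\top (\matr{X}^0\matr{B}^0)$. These expressions uniquely determine the regression parameters from the identifiable matrices above, hence $\matr{\theta}$ is identifiable. For the $(\Rightarrow)$ direction I would argue by contraposition: if $\matr{X}$ is not full rank, pick a nonzero $\matr{v}\in\ker \matr{X}$ and any row vector $\matr{w}\in\mathbb{R}^p$; then $\matr{B}' = \matr{B} + \matr{v}\matr{w}$ satisfies $\matr{X}\matr{B}'=\matr{X}\matr{B}$ and therefore yields the same joint distribution, violating identifiability. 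The same argument with a nonzero element of $\ker \matr{X}^0$ handles the case where $\matr{X}^0$ is rank-deficient.

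The only subtlety, and the point where care is needed, is the reduction step: I must check that the hypotheses of \Cref{prop:standard-identifiability} actually hold sample by sample. The open-interval constraint on $\matr{\pi}_i$ is automatic because the image of $\logit^{-1}$ is $(0,1)$, and the shared $\matr{\Omega}$ is identifiable from any one sample so that its uniqueness across samples poses no additional difficulty. Once these points are checked, the rest of the proof is routine linear algebra on the design matrices.
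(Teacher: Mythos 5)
Your proof is correct and follows essentially the same route as the paper: reduce each sample to the simple ZIPLN model of \Cref{prop:standard-identifiability} to identify $\matr{X}\matr{B}$ and $\logit^{-1}(\matr{X}^0\matr{B}^0)$, then use full column rank for injectivity of $\matr{B}\mapsto\matr{X}\matr{B}$ and $\matr{B}^0\mapsto\matr{X}^0\matr{B}^0$, with rank deficiency giving non-identifiability for the converse. You merely spell out the injectivity via the normal equations and the converse via an explicit kernel perturbation (where $\matr{w}$ should be taken nonzero), which the paper leaves implicit.
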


\begin{proof}
    If $\matr{X}$ (resp. $\matr{X^0}$) is not full rank, there exists $\matr{B} \neq \matr{B}'$ such
that $\matr{X}\matr{B} = \matr{X}\matr{B}'$ (resp. $\matr{X}^0\matr{B} = \matr{X}^0\matr{B}'$ ) and therefore the map
$\matr{\theta} \mapsto p_{\matr{\theta}}$ is not one-to-one. We know from
Proposition~\ref{prop:standard-identifiability} that $\matr{O}
+\matr{X}\matr{B}$ and $\matr{\pi} = \logit^{-1}(\matr{X}^0\matr{B}^0)$ are
identifiable. Since the affine function and the $\logit^{-1}$ are both one-to-one,
parameters $\matr{B}$ and $\matr{B}^0$ are identifiable as soon as the maps
$\matr{B} \mapsto \matr{X}\matr{B}$ and $\matr{B}^0 \mapsto
\matr{X}^0\matr{B}^0$ are injective, which is the case as soon as $\matr{X}$ and $\matr{X}^0$
are full rank.
\end{proof}

\section{Estimation by Variational Inference}  \label{sec:var-prox}

Our goal is to maximize the marginal likelihood. In the framework of latent models, a standard approach (e.g. with \textit{Expectation-Maximization} algorithms) uses the following decomposition by integrating over the latent variables $\matr{W}, \matr{Z}$
\begin{multline}
  \label{eq:loglik}
\log p_\theta(\matr{Y}) =  \log \frac{p_\theta(\matr{Z}, \matr{W}, \matr{Y})}{p_\theta(\matr{Z}, \matr{W} | \matr{Y})} \\ = \int_{\matr{W},\matr{Z}} \log \frac{p_\theta(\matr{Z}, \matr{W}, \matr{Y})}{p_\theta(\matr{Z}, \matr{W} | \matr{Y})} p_\theta(\matr{Z}, \matr{W}|\matr{Y}) d\matr{W}d\matr{Z}.
\end{multline}
However, for the ZIPLN model, it is untractable since the conditional
distribution $p_\theta(\matr{Z}, \matr{W} | \matr{Y})$ has no closed-form. To
overcome this issue, we rely on a variational approximation of this
distribution which will yield a lower bound of $\log
p_\theta(\cdot)$: for observation $i$, we denote by $\pt_{\psi}(\matr{Z}_i,
\matr{W}_i)$ the approximation of $p_\theta(\matr{Z}_i, \matr{W}_i
|\matr{Y}_i)$ where $\psi$ is a set of variational parameters to be optimized.
Subtracting to the untractable Expression \paref{eq:loglik} of the
log-likelihood the positive (and also untractable) quantity (known as the
Kullback-Leibler divergence)
\begin{multline*}
 KL( \pt_\psi(.) \|p_\theta(. | \matr{Y}) ) = \\ \int_{\matr{W},\matr{Z}} \log \frac{\pt_\psi(\matr{Z}, \matr{W})}{p_\theta(\matr{Z}, \matr{W} | \matr{Y})} \pt_\psi(\matr{Z}, \matr{W}) d\matr{W}d\matr{Z}
\end{multline*}
results after some rearrangements in the following Evidence Lower BOund (ELBO):
\begin{align}
J(\theta, \psi) & =  \log p_\theta(\matr{Y}) - KL(\pt_\psi(.)\| p_\theta(. | \matr{Y})) \nonumber \\
& = \int_{\matr{W},\matr{Z}} \log \frac{p_\theta(\matr{Z}, \matr{W}, \matr{Y})}{\pt_\psi(\matr{Z}, \matr{W})} \pt_\psi(\matr{Z}, \matr{W}) d\matr{W}d\matr{Z}\nonumber  \\
& = \mEt [\log p_\theta(\matr{Z}, \matr{W}, \matr{Y})] - \mEt [ \log \pt_\psi(\matr{Z}, \matr{W})]. \label{ELBO_theorique}
\end{align}
This also looks like a plugin of integral \Cref{eq:loglik}  with
$p_\theta(\matr{Z}, \matr{W} | \matr{Y})$ replaced with $\pt_\psi(\matr{Z},
\matr{W})$. An appropriate choice of variational approximation will make the integral
calculation tractable, while leading to an acceptable approximation of the
log-likelihood \citep{blei2017variational}. The choice of the variational
family is crucial as an inappropriate or too simplistic  family can lead to
bias and inconsistency in the resulting estimator \citep{sandwich}
whereas a too complex family would lead to an untractable optimization criterion.

\subsection{Choice of the variational family}

\paragraph*{Standard variational approximation} A straightforward, yet efficient,
approach is to consider the mean field approximation, which breaks
all dependencies between the vectors $\matr Z_i$ and $\matr W_i$ and their
respective coordinates and approximates the conditional distribution as the product of
its coordinate-wise marginals:
\begin{align*}
\pt^{(1)}_{\psi_i}(\matr{Z}_i, \matr{W}_i) & \triangleq \pt_{\psi_i}(\matr{Z}_i) 
\pt_{\psi_i}(\matr{W}_i) \\ & = \otimes_{j=1}^p \pt_{\psi_i}(\matr{Z}_{ij})
\pt_{\psi_i}(\matr{W}_{ij}).
\end{align*}
On top of that, we assume Gaussian and Bernoulli distribution for $\pt_{\psi_i}(\matr{Z}_{ij})$ and
$\pt_{\psi_i}(\matr{W}_{ij})$ respectively, giving rise to the following variational approximation
\begin{equation} \label{eq:var-prox-1}
\ptun_{\psi_i} \left(\boldsymbol Z_i, \boldsymbol W_i \right) = \otimes_{j=1}^p \mathcal N\left( M_{ij}, S_{ij}^2\right) \mathcal B\left(P_{ij} \right)
\end{equation}
with $0 \leq P_{ij} \leq 1$ and $\psi_i = \left(M_{ij}, S_{ij},
P_{ij}\right)_{1 \leq j \leq p} $. We denote $\boldsymbol M, \boldsymbol S$ and
$\matr P$ the $n\times p$ matrices with respective entries $M_{ij}, S_{ij}$
and $P_{ij}$ ($1 \leq i \leq n, 1 \leq j \leq p$). This approximation therefore requires the estimation of $3 n p$ additional variational parameters on top of $\theta$.

\paragraph*{Enhanced variational approximation}

As $W_{ij}$ can take only two values, the dependence between $\matr Z_{ij}$ and
$\matr W_{ij}$ can easily be made explicit by noting that
\begin{multline} \label{eq:conditioning} 
    Z_{ij} | W_{ij}, Y_{ij} = \\ \left(Z_{ij}|Y_{ij}, W_{ij} 
    = 1 \right)^{ W_{ij}}\left(Z_{ij}|Y_{ij}, W_{ij} = 0 \right)^{1-
W_{ij}}.\end{multline}
The conditional distribution of $Z_{ij}|Y_{ij}, W_{ij} = 1$ simplifies to
$Z_{ij}| W_{ij} = 1$ and is thus known as $Z_{ij}$ and $W_{ij}$ are independent:
it follows a Gaussian distribution with mean $ \boldsymbol x_i \tr B_j$ and variance
$\Sigma_{jj}$. By contrast, $Z_{ij}| Y_{ij}, W_{ij}=0$ is untractable and approximated
by a Gaussian distribution, giving rise to an alternative and slightly more involved variational approximation:
\begin{multline}  \label{eq:var-prox-2}
    \ptdeux_{\psi_i}(\boldsymbol Z_i, \boldsymbol W_i) = \\
    \otimes_{j = 1}^p \mathcal N(\boldsymbol x_i \tr \matr B_j, \Sigma_{jj})^{W_{ij}} \mathcal N(M_{ij},
    S_{ij}^2)^{1-W_{ij}} W_{ij}, \\ 
    W_{ij} \sim^\text{indep} \mathcal B\left(P_{ij}\right).
\end{multline}
The resulting ELBOs are tractable and detailed below.

\begin{remark} \label{rem:easy}It can be shown that $W_{ij}| Y_{ij},Z_{ij} \sim\mathcal B \left( \sigma \left(
\log\left(\frac{\pi_{ij}}{1- \pi_{ij}}\right) + Z_{ij}
    \right)\right) \boldsymbol 1_{Y_{ij} = 0}$ where $\sigma(\cdot) = \logit^{-1}$,
    so that one could condition the other way around. While this
    conditional law is intuitive and appears simpler than \Cref{eq:conditioning} at first glance, as it only
    involves a Bernoulli variable, it turns out to be untractable. Indeed, the resulting ELBO
    involves the entropy term $\mathbb{\tilde E} \left[\log
    \tilde p_{\psi}(\matr W | \matr Z) \right]$, the computation of which requires computing expectations of the form
    $\mathbb{E}\left[ \log\left( \sigma\left(U\right)
\right)\sigma\left(U\right) \right]$ for arbitrary univariate Gaussians $U \sim \mathcal N \left(  \mu,\sigma^2 \right)$,
which are untractable when $U$ is non-degenerated.
\end{remark}

\subsection{Expected lower bounds} We set $\psi = \left(\psi_i \right)_{1 \leq i
\leq n}$ the variational parameters of the variational distribution $\ptun_{\psi} = \prod_{i=1}^n \ptun_{\psi_i}$
 (resp. $\ptdeux_{\psi} = \prod_{i=1}^n \ptdeux_{\psi_i}$) defined in
\Cref{eq:var-prox-1} (resp. \Cref{eq:var-prox-2}). We denote by $\mEtun$ (resp. $\mEtdeux$) its expectation and by
$\Jun(\psi, \theta)$ (resp. $\Jdeux(\psi, \theta)$) the corresponding ELBO, the expression of which is detailed in the next proposition.

\begin{restatable}{proposition}{elbos} \label{prop:elbo}
    The ELBO defined in \Cref{ELBO_theorique} with variational approximation
    $\ptun_{\psi}$ can
    be written in matrix form as
\begin{align}
    \Jun & (\psi, \ta) = \mEtun \left[\log p_{\ta}(\matr Y | \matr Z, \matr W) \right] + \mEtun \left[\log p_{\ta}(\matr W) \right] \nonumber \\
    & + H(\matr P)  + \frac 1 2 \trace \left(\matr 1_{n,p}\tr \log (\matr{S^2}) \right) +  \frac{n}{2} \logdet{\Omega} \label{eq:ELBOMF} \\
    & - \frac 1 2  \trace \left(  \matr \Omega \left( \Diag(\bar{\matr S}^{2}) +  g\left(\matr M - \matr X \matr B \right)\right) \right) + \frac{np}{2} \nonumber
\end{align}
and with variational approximation $\ptdeux_{\psi}$ we get
\begin{align}
    \Jdeux &  (\psi, \ta) = \mEtdeux \left[\log p_{\ta}(\matr Y | \matr Z, \matr W) \right] + \mEtdeux \left[\log p_{\ta}(\matr W) \right] \nonumber\\ 
    & +  H(\matr P)  + \frac 1 2 \trace \left( \matr Q\tr\log (\matr{S^2}) \right)+ \frac{n}{2} \logdet{\Omega}\nonumber\\
    & - \frac 1 2  \trace \left( \matr \Omega \left( \Diag\left(\matr 1_n \tr \left(\matr Q \odot \matr{S}^2\right)\right)\right)\right) \nonumber \\
     & - \frac 1 2  \trace \left(g\left(\matr Q \odot \left(\matr M - \matr X \matr B \right) \right)\right)  \label{eq:ELBO_enhanced} \\
    &  - \frac{1}{2} \trace \left( \diag(\matr{\Omega})1_n \tr\left( (1_n \diag(\matr{\Sigma})\tr) \odot \matr{P}\right)  \right)\nonumber \\
    &  - \frac{1}{2} \trace \left( \diag(\matr{\Omega})1_n \tr\left(\matr{P} \odot \matr Q \odot (\matr{M} - \matr X \matr B)^2 \right)  \right)\nonumber \\
    & - \frac{1}{2} 1_n\tr \matr{P} \log(\diag(\matr{\Sigma})) + \frac{np}{2}\nonumber ,
\end{align}
where $\odot$ denotes the Hadamard product, $\diag$ returns a vector
made of the diagonal of the input squared matrix, $\matr 1_n$ (resp. $\matr 1_{n,p}$) is a column-vector (resp. matrix) of size $n$ (resp. $n\times p$) filled with ones, $\Diag$ takes a vector
$x$ and returns a diagonal matrix with diagonal $x$,
logarithm and squared functions are applied component-wise, $\matr Q = \matr
1_{n,p} - \matr P$  and  $g(\matr D) =  \matr D \tr  \matr D$ for $\matr D \in
\mathbb R^{n \times p}$. We denoted $\bar{\matr S}^{2} = \matr 1_n\tr\matr
S^2$ and
$$\delta_{0, \infty}(x) = \begin{cases} 0 & \text{ if } x = 0 \\ -\infty &
\text{ else} \end{cases}$$ with the convention that $0 \times \delta_{0,
\infty}(x) = 0$ for all $x$. Note that both ELBOs share the following terms ($\mEtun$ and $\mEtdeux$ coincides for these terms so that we drop the index):
\begin{align*}
     \tilde{\mE} & \left[\log p_{\ta}(\matr Y | \matr Z, \matr W) \right] = 
     \trace \left(\matr P \tr \matr \delta_{0, \infty}\left( \matr Y \right)\right)\\
     & \trace \left( \matr Q \tr \left( \matr Y
    \odot \left(\matr O + \matr M\right) - \matr A - \log(\matr Y!) \right)
  \right), \\
     & \tilde{\mathbb E} \left[\log p_{\ta}(\matr W) \right] = \trace \left( \matr
     P \tr \matr  \mu_0 - \matr 1_{n,p}\tr \log\left( \matr{1}_{n,p} + e^{\matr \mu_0}\right) \right)
    , \\
        &  H(\matr P) = -\trace \left(
     \matr{P} \tr \log(\matr{P}) + \matr Q\tr \log\left(\matr Q \right) \right) ,
\end{align*}
where factorial and exponential are applied component-wise and the matrix $\matr
A$ denotes $\exp(\matr O + \matr M + \matr S^2 / 2)$ where $\exp$ is applied
component-wise and $\matr \mu_0 = \matr 1_{n,p}\times\logit(\pi)$ in the ND case,
$\matr \mu_0 =\boldsymbol X^0 \boldsymbol
B^0$ in the CD case and $\matr \mu_0 = \bar{\boldsymbol B}^0 \bar{\boldsymbol
X}^0$ in the RD case. We used the convention $0\times \log(0)=0$ for all $x$.
\end{restatable}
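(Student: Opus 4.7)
The plan is to decompose $J(\ta, \psi) = \mEt[\log p_\ta(\matr Z, \matr W, \matr Y)] - \mEt[\log \pt_\psi(\matr Z, \matr W)]$ via the prior factorisation $p_\ta(\matr Z, \matr W, \matr Y) = p_\ta(\matr Y \mid \matr Z, \matr W)\, p_\ta(\matr Z)\, p_\ta(\matr W)$, which uses the prior independence $\matr Z \perp \matr W$, and to evaluate the four resulting expectations term by term under each variational family. The contributions $\mEt[\log p_\ta(\matr Y \mid \matr Z, \matr W)]$ and $\mEt[\log p_\ta(\matr W)]$ factorise over $(i,j)$ and reduce to elementary Gaussian and Bernoulli moments, while $\mEt[\log p_\ta(\matr Z)]$ and the variational entropy carry the genuine difference between $\ptun$ and $\ptdeux$.

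I would first check that two contributions are common to both ELBOs. Writing $\log p_\ta(Y_{ij} \mid Z_{ij}, W_{ij}) = W_{ij}\delta_{0,\infty}(Y_{ij}) + (1-W_{ij})[Y_{ij}(o_{ij} + Z_{ij}) - e^{o_{ij}+Z_{ij}} - \log(Y_{ij}!)]$, on $\{W_{ij} = 1\}$ only the Dirac term contributes, while on $\{W_{ij} = 0\}$ both $\ptun$ and $\ptdeux$ impose $Z_{ij} \mid W_{ij}=0 \sim \mathcal N(M_{ij}, S_{ij}^2)$, so $\mEt[e^{o_{ij}+Z_{ij}} \mid W_{ij} = 0] = A_{ij}$ in either case; this delivers the announced expression for $\mEt[\log p_\ta(\matr Y | \matr Z, \matr W)]$. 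Similarly $\log p_\ta(W_{ij}) = W_{ij}\mu_{0,ij} - \log(1 + e^{\mu_{0,ij}})$, with $\mu_{0,ij} = \logit(\pi_{ij})$, has expectation $P_{ij}\mu_{0,ij} - \log(1+e^{\mu_{0,ij}})$, and the Bernoulli entropy $H(\matr P)$ arising from $-\mEt[\log \pt_\psi(\matr W)]$ is identical in both families since $W_{ij} \sim \mathcal B(P_{ij})$ marginally in both.

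Under $\ptun$, the remaining piece $\mEtun[\log p_\ta(\matr Z)] - \mEtun[\log \ptun_\psi(\matr Z)]$ is standard. With $\matr Z_i$ a product of independent $\mathcal N(M_{ij}, S_{ij}^2)$, the identity $\mEtun[(\matr Z_i - \matr x_i\tr \matr B)\tr \matr\Omega(\matr Z_i - \matr x_i\tr \matr B)] = \trace(\matr\Omega \Diag(\matr S_i^2)) + (\matr M_i - \matr x_i\tr \matr B)\tr \matr\Omega (\matr M_i - \matr x_i\tr \matr B)$, summed over $i$, produces the trace against $\Diag(\bar{\matr S}^2) + g(\matr M - \matr X \matr B)$. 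Adding the univariate Gaussian entropies yields $\tfrac12 \trace(\matr 1_{n,p}\tr \log \matr S^2)$, and the $\log(2\pi)$ constants cancel against those in $\log p_\ta(\matr Z)$, leaving the additive $\tfrac{np}{2}$ in \Cref{eq:ELBOMF}.

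The hard part is the enhanced family $\ptdeux$, where $\matr Z$ and $\matr W$ are coupled. I would compute, by conditioning on $W_{ij}$, the mixture moments $\mEtdeux[Z_{ij} - \matr x_i\tr \matr B_j] = Q_{ij}(M_{ij} - \matr x_i\tr \matr B_j)$ and $\mEtdeux[(Z_{ij} - \matr x_i\tr \matr B_j)^2] = P_{ij}\Sigma_{jj} + Q_{ij}(S_{ij}^2 + (M_{ij} - \matr x_i\tr \matr B_j)^2)$. Since coordinates $j$ remain conditionally independent under $\ptdeux$, off-diagonal terms $(j \neq k)$ in $\mEtdeux[(\matr Z_i - \matr x_i\tr \matr B)\tr \matr\Omega(\matr Z_i - \matr x_i\tr \matr B)]$ factorise into products of these first moments and contribute $\matr\Omega$-weighted quadratic forms in $\matr Q \odot (\matr M - \matr X \matr B)$. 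The key algebraic step is the identity $Q_{ij} = Q_{ij}^2 + P_{ij}Q_{ij}$, which lets me split the diagonal quadratic deviation into a $Q_{ij}^2$-piece that merges with the off-diagonal sum to form $\trace(\matr\Omega\, g(\matr Q \odot (\matr M - \matr X \matr B)))$ and a leftover $\Omega_{jj}P_{ij}Q_{ij}(M_{ij} - \matr x_i\tr \matr B_j)^2$ matching the corresponding $\diag(\matr\Omega)$-line of \Cref{eq:ELBO_enhanced}; the second-moment variances $P_{ij}\Sigma_{jj}$ and $Q_{ij}S_{ij}^2$ produce the remaining $\diag(\matr\Omega)$ and $\Diag(\matr 1_n\tr(\matr Q \odot \matr S^2))$ terms. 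The $W$-conditional Gaussian entropy finally yields $\tfrac12 \trace(\matr Q\tr \log \matr S^2)$ together with the companion $\matr 1_n\tr \matr P \log\diag(\matr \Sigma)$ term, after the $\log(2\pi)$ constants cancel to leave the additive $\tfrac{np}{2}$.
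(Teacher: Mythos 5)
Your plan is correct and follows essentially the same route as the paper's appendix proof: the same decomposition into $\mEt\left[\log p_{\ta}(\matr Y \mid \matr Z, \matr W)\right] + \mEt\left[\log p_{\ta}(\matr W)\right] + \mEt\left[\log p_{\ta}(\matr Z)\right]$ minus the variational entropy, the same observation that the first two terms and the Bernoulli entropy coincide under both families, and the same coordinate-wise mixture-moment evaluation of $\mEtdeux\left[\|\matr Z_i - \matr x_i^{\top}\matr B\|_{\matr \Omega}^2\right]$ exploiting conditional independence across coordinates. The only cosmetic difference is that you split the diagonal second moments via the identity $Q_{ij}=Q_{ij}^2+P_{ij}Q_{ij}$, whereas the paper organizes the same algebra through the law of total variance; the two computations are identical term by term.
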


\begin{remark}
The main and only goal of $\delta_{0, \infty}$ is to ensure that $P_{ij} = 0$
whenever $Y_{ij} \neq 0$, \emph{i.e.} that $Y_{ij}$ doesn't originate
from the null component when it's positive.
\end{remark}

\paragraph*{Model selection criterion}

When the modelling choice of zero-inflation is unclear, we consider
two classical criteria: Bayesian Information Criterion or BIC \citep{BIC} and Akaike Information Criterion or AIC \citep{AIC} to choose between
Models \ref{eq:zi-models-nd}, \ref{eq:zi-models-cd} and \ref{eq:zi-models-rd}, where the log-likelihood is replaced by its lower bound $J$. It is worth noting that each of these criteria theoretically requires the
true log-likelihood rather than a surrogate such as the ELBO. \review{Although the ELBO and the log-likelihood may differ substantially in some models, up to the point that variational estimates are inconsistent \citep{sandwich}, the ELBO has been proven to be asymptotically equivalent to the log-Likelihood in others \citep{Brault2020}}. \review{Based on simulations studies, there is strong numerical evidence suggesting that the variational estimates are asymptotically unbiased for the ZIPLN \citep{SandwichBatardiere} and PLN models \citet{chiquet2021poisson} and that the ELBO closely approximates the true likelihood at the optimum \citet{stoehr2024composite}}. Hence, our heuristical versions of BIC and AIC criteria are given by
\begin{align*}
    \text{BIC}^{(i)} & = J^{(i)} - \frac 1 2 K\log(n), \\
    \text{AIC}^{(i)} & = J^{(i)} - K,
\end{align*}
where $K = p(p+1)/2 +pd+ c$ is the number of parameters and $c$ depends
on the modelling choice for the zero-inflation component: $1$ for Model
\ref{eq:zi-models-nd}, $pd_0$ for \ref{eq:zi-models-cd} and $nd_0$ for
 \ref{eq:zi-models-rd}.

We also consider the Integrated Complete Likelihood or ICL \citep{ICL}, a criterion
introduced in the context of mixture models to select the number of components but that still applies when the latent
variable is continuous, as in (Zero Inflated) Poisson LogNormal model. We
recall that ICL uses the conditional entropy of the latent variables given the
observations as an additional penalty with respect to BIC. The difference
between BIC and ICL measures the uncertainty of the representation of the
observations in the latent space. Note that here, ICL is not used as a
criterion for choosing the number of groups but for assessing the uncertainty
of the latent variables $Z$ and $W$.  We believe that ICL, by accounting for
these uncertainties by removing entropies (especially in the context of
variational inference), can lead to interesting model selection output that
complement the BIC or AIC choices. Because the true conditional distribution
$p_{\ta}(\matr Z, \matr W |\matr Y )$ is intractable, we replace it with its
variational approximation $\tilde
p_{\psi}(\matr W, \matr Z)$ to evaluate this entropy. Recalling that $\psi =
\left(\matr M, \matr S, \matr P\right)$, the entropy for variational approximations $\ptun_{\psi}$ and $\ptdeux_{\psi}$ are respectively given by
\begin{align*}
    & H^{(1)}(\psi) = \frac 1 2 \matr 1_n\tr \log\left(\matr S^2\right)\matr 1_p + \frac {np}{2} \log(2\pi e) + H(\matr P) \\
    & H^{(2)}(\psi) = \frac 1 2 \matr 1_n \trace\left(\left( \matr 1_{n,p} - \matr P \right) \tr \log(\matr S^2) \right) \matr 1_p \\
    & \qquad - \frac 1 2 \matr 1_n \tr \matr P \log\left( \diag\left( \Sigma \right) \right) + H(\matr P)
\end{align*}
where $H(\matr P)$ is defined in \Cref{prop:elbo}.  The ICL for variational approximation $\pt_{\psi}^{(i)} ~ (i = \{1, 2\})$ is then
\begin{equation*}
    \text{ICL}^{(i)}  = \text{BIC}^{(i)} -  H^{(i)}(\psi).
\end{equation*}

\begin{remark}
    We note that \ref{eq:zi-models-rd} is not a parametric model, for which AIC, BIC and ICL have a theoretical grounding, but a semi-parametric one. We nevertheless use those criteria to compare the three models to each other.
\end{remark}

The following section discusses the optimization of both ELBOs to estimate
the model parameters $\theta$.

\section{Optimization}
\label{sec:inference}

Estimating $\theta$ is equivalent to solving the optimization problem
\begin{align}
    \label{eq:criterion}
   & \argmax_{\psi, \theta} J( \psi, \ta).
\end{align}
where $J$ can be either $\Jun$ (standard approximation) or $\Jdeux$ (enhanced approximation).

\subsection{Optimization of $\Jun$}
Past experience for standard PLN models
\citep{Chiquet2017,chiquet2019variational, chiquet2021poisson} (and analytical
properties of $\Jun$ derived in this section) suggests solving the above
problem using alternated gradient descent.

Consider the ELBO $\Jun$ defined in \Cref{prop:elbo}. The
following technical propositions will serve to update some parameters
in an alternate optimization scheme.

\begin{restatable}{proposition}{updateBOmegaPB0}[Updates of $\matr B,
    \matr{\Omega}, \matr P$ and $\matr B^0$] \label{prop:updates}
For fixed $\psi$, the values of $\matr{\Omega}, \matr{B}$ maximizing $\Jun$ are
\begin{align*}
    \widehat{\matr{B}} & = [\matr{X}^\top  \matr{X}]^{-1} \matr{X}^\top\matr{M},\\
    \widehat{\matr{\Omega}} & = n \left[ g(\matr{M} - \matr{XB})+ \bar{\matr{S}}^2 \right]^{-1}.\\
\end{align*}
where $g(\matr{D}) = \matr{D}^\top \matr{D}$ as in \Cref{prop:elbo}.
Furthermore, if $\matr X^0 = \matr 1_n$ , $\Jun$ is maximized at
$$\widehat{\matr B}^{0} = \frac 1 n \matr 1_n \tr \matr P.$$
\newline \noindent
When $\ta$ is fixed, $\Jun$ is concave with respect to $\matr P$ and maximized at
$$\widehat {\matr P} = \logit^{-1} \left( \matr{A} + \matr{X}^0\matr{B}^0 \right)
            \times \delta_0(\matr{Y}).$$
\end{restatable}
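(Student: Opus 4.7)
The plan exploits a partial separability of $\Jun$ in $\theta$: each of $\matr B$, $\matr\Omega$, $\matr B^0$ and $\matr P$ appears in only a handful of terms of \Cref{eq:ELBOMF}, and in each case the restricted objective is concave with a closed-form stationary point. I will therefore, for each parameter, isolate the dependent terms in \Cref{eq:ELBOMF} and in the shared pieces specified in \Cref{prop:elbo}, write down the first-order condition, and invert it.

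For $\matr B$, only the quadratic $-\tfrac{1}{2}\trace(\matr\Omega\, g(\matr M - \matr X \matr B))$ involves it; since $\matr\Omega$ is positive definite this is a concave quadratic in $\matr B$, and setting the gradient to zero gives the normal equations $\matr X\tr \matr X \matr B = \matr X\tr \matr M$, hence the OLS formula under the full column rank of $\matr X$. For $\matr\Omega$, the dependent terms reduce to $\tfrac{n}{2}\logdet{\Omega} - \tfrac{1}{2}\trace(\matr\Omega \matr R)$ with $\matr R = \Diag(\bar{\matr S}^{2}) + g(\matr M - \matr X \matr B)$; the standard identities $\nabla_{\matr\Omega}\logdet{\Omega} = \matr\Omega^{-1}$ and $\nabla_{\matr\Omega}\trace(\matr\Omega \matr R) = \matr R$ then give the unique critical point $\matr\Omega^{-1} = \matr R/n$, which is a maximum by the usual Gaussian MLE argument. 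For $\matr B^0$ when $\matr X^0 = \matr 1_n$, only $\mEtun[\log p_\theta(\matr W)]$ depends on it and reduces to a sum of Bernoulli cross-entropies with column-constant logits $\matr\mu_0 = \matr 1_n \matr B^0$; the first-order condition then identifies the column means of $\matr P$ with the fitted probabilities, yielding the stated averaging formula.

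The $\matr P$ update is the step where I expect the main bookkeeping difficulty. All $P_{ij}$-dependencies in $\Jun$ are linear except for the Bernoulli entropy $H(\matr P)$, which is strictly concave; hence $\Jun$ is concave in $\matr P$. For entries with $Y_{ij} \neq 0$, the term $P_{ij}\,\delta_{0,\infty}(Y_{ij})$ forces $P_{ij} = 0$, which is exactly what the factor $\delta_0(\matr Y)$ encodes. For entries with $Y_{ij} = 0$, one gathers every linear contribution to $P_{ij}$, coming both from $\mEtun[\log p_\theta(\matr Y \mid \matr Z, \matr W)]$ via $\matr Q = \matr 1_{n,p} - \matr P$ and from $\mEtun[\log p_\theta(\matr W)]$, differentiates the entropy, and solves; this yields $\log(P_{ij}/(1 - P_{ij})) = A_{ij} + (\matr X^0 \matr B^0)_{ij}$, hence the $\logit^{-1}$ closed form. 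The care required is in correctly combining the direct and $\matr Q$-mediated occurrences of $\matr P$, and in using the convention $0 \times \delta_{0,\infty} = 0$ so that zero counts do not spuriously exclude themselves from the pure-zero component.
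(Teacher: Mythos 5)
Your proposal is correct and follows essentially the same route as the paper: null-gradient (first-order) conditions for $\matr B$, $\matr\Omega$ and $\matr B^0$ — which the paper relegates to the appendix gradients and "leaves to the reader" — together with entrywise separability in $\matr P$, the $\delta_{0,\infty}$ term forcing $P_{ij}=0$ when $Y_{ij}>0$, and a linear-plus-Bernoulli-entropy objective for $Y_{ij}=0$ whose stationary point gives $\logit(P_{ij}) = A_{ij} + (\matr X^0\matr B^0)_{ij}$. The only cosmetic caveat is for $\matr B^0$: the first-order condition identifies $\logit^{-1}(\widehat{\matr B}^0)$ with the column means $\tfrac1n\matr 1_n\tr\matr P$ (i.e.\ the statement holds on the probability scale), exactly as your fitted-probability reading — and the paper's — implies.
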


\begin{proof} 
    Proofs for $\widehat{\matr{\Omega}}$ and $\widehat{\matr{B}}$ and $\widehat{\matr{B}}^0$
    are straightforward using null-gradient conditions given in the appendix (\Cref{prop:J-gradient-mod-param-1}) and left to the reader. We only prove the concavity with respect to $\matr P$. As $\Jun$ is separable in each $P_{ij}$, we only need to prove
    concavity with respect to each $P_{ij}$. If $Y_{ij} > 0$, $P_{ij}\delta_{0,\infty}(Y_{ij}) = -\infty$ as soon as $P_{ij} > 0$ and $\Jun$ is therefore concave in $P_{ij}$. If $Y_{ij} = 0$, $\Jun$ depends on $P_{ij}$ only through $P_{ij} A_{ij} + P_{ij} 
    [\matr{x}_i^{0^\top} \matr{B}_j^0 - \logit(P_{ij})] - \log(1 -P_{ij})$ which is concave in $P{ij}$.
\end{proof}

\SetKwBlock{mstep}{M-step}{}
\SetKwBlock{vestep}{VE-step}{}

\begin{algorithm*}
    \caption{VEM
    \label{alg:VEM}
    }
    \Input{$\ta^{(0)}, \psi^{(0)}$ initial point, $T\geq1$ number of iterations.}
    \For{$s = 0, \dots T-1$}
    {
        \mstep{
        \begin{flalign*} & \matr{\Omega}^{(s+1)} = n \left[ g\left(\matr{M}^{(s)} - \matr{XB}^{(s)}\right) + \bar{\matr{S}^2}^{(s)} \right]^{-1} &&\\
            & \matr{B}^{(s+1)} = [\matr{X}^\top \matr{X}]^{-1} \matr{X}^\top \matr{M}^{(s)} && \\
            & \matr{B}^{0, (s+1)} = \argmax_{\matr{B}^0} \trace\left[\
      \left(\matr{P}^{(s)}\right)^\top \matr{X}^{0}\matr{B}^0 \right] - \trace\left[
        \matr{1}_{n,p}^\top \log\left(1 + e^{\matr{X}^0\matr{B}^0}\right)\right]&&
    \end{flalign*}}
    \vestep{
    \begin{flalign*}
         & \matr{P}^{(s+1)} = \logit^{-1}\left( \matr{A}^{(s)} + \matr X^0\matr{B}^{0, (s+1)}\right) \times \delta_{0}(\matr{Y}),\quad  \matr Q^{(s+1)} = \matr 1_{n,p} - \matr P^{(s+1)} && \\
         & \matr{M}^{(s+1)} = \argmax_{\matr{M}} \left(
             \trace\left(\matr Q^{(s+1)\top} \left(\matr{Y} \odot
\matr{M} - \matr{A}\right)\right) - \frac12 \trace\left( \matr{\Omega}^{(s+1)} g\left(\matr{M} -
\matr{XB}^{(s+1)}\right) \right) \right) && \\
                                         & \matr{S}^{(s+1)} = \argmax_{\matr{S}} \left( -\trace\left(Q^{(s+1) \top} \matr{A}\right) - \frac12
\trace\left(\matr{1}_{n,p}^\top \log\left(\matr{S}^2\right)\right) - \frac12
    \trace\left(\matr{\Omega}^{(s+1)} \bar{\matr{S}}^2\right) \right)&&
\end{flalign*}}
}\
    \Output{$\ta^{(T)}, \psi^{(T)}$}
\end{algorithm*}

An alternated gradient descent optimizing $\Jun$ is proposed in \Cref{alg:VEM} and convergence to a stationary point is a direct consequence of the following lemma.
\begin{restatable}[Convergence properties]{lemma}{concavity}
    $\Jun$ is (separately) concave in $\theta$ and $\psi$.
\end{restatable}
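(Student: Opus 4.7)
The plan is to parse ``separately concave in $\theta$ and $\psi$'' as follows: with $\psi$ held fixed, $\Jun$ is concave in each block of $\theta = (\matr{\Omega}, \matr{B}, \matr{B}^0)$ when the other two blocks are held fixed, and symmetrically for $\psi = (\matr{M}, \matr{S}, \matr{P})$ when $\theta$ is held fixed --- which is exactly what legitimizes the six alternating updates in \Cref{alg:VEM}. The ELBO in \Cref{prop:elbo} already splits into summands depending on small subsets of parameters, so I would simply identify the pieces depending on each of $\matr{\Omega}, \matr{B}, \matr{B}^0, \matr{M}, \matr{S}, \matr{P}$ and invoke standard concavity facts.

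For the $\theta$ side: concavity in $\matr{\Omega}$ follows from the classical concavity of $\logdet{\matr{\Omega}}$ on $\mathbb{S}^{++}$ combined with linearity of $\trace(\matr{\Omega}\matr{H})$ for $\matr{H} = \Diag(\bar{\matr{S}}^2) + g(\matr{M}-\matr{X}\matr{B}) \succeq 0$, which is constant in $\matr{\Omega}$. Concavity in $\matr{B}$ uses the identity $\trace(\matr{\Omega}(\matr{M}-\matr{X}\matr{B})^\top(\matr{M}-\matr{X}\matr{B})) = \|(\matr{M}-\matr{X}\matr{B})\matr{\Omega}^{1/2}\|_F^2$, so the only $\matr{B}$-dependent piece of $\Jun$ is a negated squared Frobenius norm of an affine function of $\matr{B}$. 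Concavity in $\matr{B}^0$ reduces to the standard fact that the logistic log-likelihood $\trace(\matr{P}^\top \matr{X}^0 \matr{B}^0) - \trace(\matr{1}_{n,p}^\top \log(\matr{1}_{n,p} + e^{\matr{X}^0 \matr{B}^0}))$ is concave, since $x \mapsto \log(1+e^x)$ is convex.

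For the $\psi$ side: concavity in $\matr{P}$ is precisely the claim already established in \Cref{prop:updates}. Concavity in $\matr{M}$ combines the concavity of $-\tfrac12\|(\matr{M}-\matr{X}\matr{B})\matr{\Omega}^{1/2}\|_F^2$ with the observation that $\trace(\matr{Q}^\top(\matr{Y}\odot(\matr{O}+\matr{M}) - \matr{A}))$ is linear in $\matr{M}$ minus a non-negative combination (since $\matr{Q} \geq 0$) of exponentials that are convex coordinatewise in $\matr{M}$. Concavity in $\matr{S}$ is the most delicate: once restricted to $\matr{S}$-dependent summands and decomposed coordinatewise, each coordinate contributes $\log|S_{ij}| - Q_{ij}\exp(o_{ij}+M_{ij}+S_{ij}^2/2) - \tfrac12\,\Omega_{jj}\,S_{ij}^2$, which is concave in $S_{ij}$ because $\log|\cdot|$ is concave on positive reals, $\exp$ of a convex quadratic is convex with non-negative coefficient $Q_{ij}$, and $\Omega_{jj}>0$ as a diagonal entry of an SPD matrix.

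The main obstacle is the $\matr{S}$ block, whose contribution to $\Jun$ mixes three distinct nonlinear effects (logarithm, exponential of a convex quadratic, quadratic penalty). The key observation that unlocks the argument is that all three decouple coordinatewise in $S_{ij}$, reducing the claim to three one-dimensional facts. Elsewhere the arguments are one-line identities; the bulk of the work is bookkeeping, correctly matching summands of $\Jun$ with the corresponding parameter block.
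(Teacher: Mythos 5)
Your proof is correct block by block, but it takes a genuinely different route from the paper and proves a slightly different statement. The paper's proof is a two-line reduction: it observes that $\Jun$ is additively separable between $(\matr M,\matr S)$ and $\matr P$ on the $\psi$ side and between $(\matr B,\matr\Omega)$ and $\matr B^0$ on the $\theta$ side, then delegates the $(\matr M,\matr S)$ and $(\matr B,\matr\Omega)$ blocks to Lemma~1 of the standard PLN paper, handles $\matr P$ via \Cref{prop:updates}, and handles $\matr B^0$ via the concavity of the logistic log-likelihood $\matr a\tr\matr\beta-\matr 1_n\tr\log(1+e^{\matr X^0\matr\beta})$. You instead verify every block from first principles; your coordinatewise treatment of the $\matr S$ block (concavity of $\log|\cdot|$ on the positive reals, of $-Q_{ij}e^{o_{ij}+M_{ij}+S_{ij}^2/2}$ since $Q_{ij}\ge 0$ and $e^{S^2/2}$ is convex, and of $-\tfrac12\Omega_{jj}S_{ij}^2$ since $\Omega_{jj}>0$) is exactly the content that the paper outsources to the PLN literature, so your version is more self-contained where the paper is a citation.

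The substantive divergence is your reading of ``separately concave in $\theta$ and $\psi$'' as six independent block concavities. The paper means bi-concavity: concavity in the whole of $\theta=(\matr\Omega,\matr B,\matr B^0)$ for fixed $\psi$, and in the whole of $\psi=(\matr M,\matr S,\matr P)$ for fixed $\theta$ --- that is why its proof invokes \emph{joint} concavity in $(\matr M,\matr S)$ and in $(\matr B,\matr\Omega)$ from the PLN lemma before gluing in the separable $\matr P$ and $\matr B^0$ pieces. On the $\psi$ side your one-dimensional facts do assemble into the joint statement, because the only term coupling $\matr M$ and $\matr S$ is $-\sum_{ij}Q_{ij}e^{o_{ij}+M_{ij}+S_{ij}^2/2}$, which is jointly concave ($(M,S)\mapsto M+S^2/2$ is jointly convex and $\exp$ is convex increasing); you should state this assembly explicitly rather than treat $\matr M$ and $\matr S$ as unrelated blocks. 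On the $\theta$ side, however, the coupling term $-\tfrac12\trace\left(\matr\Omega\, g(\matr M-\matr X\matr B)\right)$ is \emph{not} jointly concave in $(\matr B,\matr\Omega)$: already in the scalar case $(b,\omega)\mapsto\tfrac12\log\omega-\tfrac{\omega}{2}(m-b)^2$ has Hessian determinant $\tfrac{1}{2\omega}-(m-b)^2$, which is negative away from $b=m$. So your weaker, block-wise claim is the strongest one that actually holds there, and it is also all that the separate $\matr\Omega$- and $\matr B$-updates of \Cref{alg:VEM} require; just be aware that it is not the statement the lemma literally asserts and the paper's proof claims to establish.
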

\begin{proof}

 For $\psi = (\psi_1, \matr P)$ with $\psi_1 = (\matr M, \matr S)$, note first that $\Jun$ is separable in $\psi_1$ and
 $\matr P$ so we can prove it independently for each parameter. For $\psi_1$, it
 follows from the same result in the standard PLN model (lemma 1 of \citet{Chiquet2017}). For $\matr P$, it follows from
 \Cref{prop:updates}.
 For $\theta$, note
 first that $\Jun$ is separable in $(\matr{B}, \matr{\Omega})$ and $\matr{B}^0$ so
 we can prove it independently for each parameter. For the former, it follows
 from the same result in the standard PLN model (for $(\matr{B},
 \matr{\Omega})$).
For the latter, it follows from the concavity (for all $\matr a \in \mathbb R^d$ fixed)
  of the following function $f: \mathbb R^d \mapsto \mathbb R$:
 \begin{equation*}
  f(\matr{\beta})  = \matr{a}^\top \matr{\beta} -
  \matr{1}_n^\top\log \left( 1+e^{\matr{X}^0\matr{\beta}}\right).
\end{equation*}
\end{proof}

\subsection{Optimization of $\Jdeux$}
While optimization of $\Jun$ is easily manageable using closed forms  and
benefits from a bi-concavity property, optimization of $\Jdeux$
is more challenging. Indeed, the concavity in
$\matr \Omega$ is lost and no closed form can be used for any parameter update.

We do not maximize the ELBO with respect to each parameter in an alternate coordinate-wise fashion but instead compute the gradient with
respect to $(\psi,\ta)$ as if it were a single parameter and perform a gradient
update. Formally, given $\psi^{(0)}, \ta^{(0)}$ and a learning rate $\eta>0$, we perform the update step
\begin{equation}\label{eq:upate_J2}
(\psi^{(s+1)}, \ta^{(s+1)}) = (\psi^{(s)}, \ta^{(s)}) + \eta \nabla_{\psi, \ta}\Jdeux( \psi^{(s)}, \ta^{(s)})
\end{equation}
until a convergence criteria or a maximum number of iterations is reached.
    \subsection{Optimization using analytic law of $W_{ij}| Y_{ij}$}
    The exact conditional law $W_{ij} |
    Y_{ij}$  can be derived and is detailed in the next proposition.
\begin{restatable}{proposition}{conditionals}
    \label{prop:law_W}
    Let $1 \leq j \leq p$. The conditional law of $W_{ij}| Y_{ij}$ is given by
\begin{equation*}
    \mathcal B\left(\frac{\pi_{ij}}{ \varphi\left(o_{ij} + \matr x_i\tr \boldsymbol B_j, \Sigma_{jj}\right)
\left(1 - \pi_{ij}\right) + \pi_{ij}}\right) \boldsymbol 1_{Y_{ij} = 0}
\end{equation*}
with $\varphi(\mu,\sigma^2) = \mathbb E \left[ \exp(-X)\right], ~ X \sim
\mathcal L \mathcal N \left( \mu, \sigma^2\right)$.
\end{restatable}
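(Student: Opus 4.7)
The plan is to compute $\mathbb{P}(W_{ij} = 1 \mid Y_{ij} = y)$ by a direct application of Bayes' formula, exploiting the hierarchical structure of the ZIPLN model in \Cref{eq:ZIPLN-model}. I would begin by writing
\[
\mathbb{P}(W_{ij} = 1 \mid Y_{ij} = y) = \frac{\mathbb{P}(Y_{ij} = y \mid W_{ij} = 1)\,\pi_{ij}}{\mathbb{P}(Y_{ij} = y \mid W_{ij} = 1)\,\pi_{ij} + \mathbb{P}(Y_{ij} = y \mid W_{ij} = 0)(1 - \pi_{ij})}.
\]
Because $Y_{ij} \mid W_{ij} = 1 \sim \delta_0$, the numerator vanishes whenever $y > 0$; this accounts for the indicator $\boldsymbol{1}_{Y_{ij} = 0}$ in the statement and reduces the remaining work to the case $y = 0$.

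For $y = 0$, the numerator is simply $\pi_{ij}$ since $\mathbb{P}(Y_{ij} = 0 \mid W_{ij} = 1) = 1$. The main ingredient is therefore $\mathbb{P}(Y_{ij} = 0 \mid W_{ij} = 0)$, which I would obtain by the tower property, conditioning further on $Z_{ij}$. Given $W_{ij} = 0$ and $Z_{ij}$, $Y_{ij}$ is Poisson with intensity $\exp(o_{ij} + Z_{ij})$, so
\[
\mathbb{P}(Y_{ij} = 0 \mid W_{ij} = 0, Z_{ij}) = \exp\bigl(-\exp(o_{ij} + Z_{ij})\bigr).
\]
Taking expectation over $Z_{ij} \sim \mathcal{N}(\matr{x}_i^\top \matr{B}_j, \Sigma_{jj})$, the random variable $X = \exp(o_{ij} + Z_{ij})$ is log-normal with parameters $(o_{ij} + \matr{x}_i^\top \matr{B}_j, \Sigma_{jj})$, so by definition of $\varphi$ in the proposition,
\[
\mathbb{P}(Y_{ij} = 0 \mid W_{ij} = 0) = \mathbb{E}[e^{-X}] = \varphi\bigl(o_{ij} + \matr{x}_i^\top \matr{B}_j,\; \Sigma_{jj}\bigr).
\]

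Substituting back into the Bayes formula yields exactly the claimed Bernoulli parameter, and multiplying by $\boldsymbol{1}_{Y_{ij} = 0}$ captures the degeneracy on $Y_{ij} > 0$. No step is really a serious obstacle: the only mild subtlety is identifying the Laplace transform of a log-normal at $1$ as the function $\varphi$, which is just a matter of notation. The proof is thus essentially a two-line computation once the conditional independence $Z_{ij} \perp W_{ij}$ and the Poisson probability at zero are used.
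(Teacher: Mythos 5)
Your proof is correct and is essentially the argument the paper intends (the paper itself omits the proof as routine): Bayes' formula, the tower property over $Z_{ij}$, and the identification of $\mathbb{P}(Y_{ij}=0\mid W_{ij}=0)$ as the Laplace transform at $1$ of a log-normal with parameters $(o_{ij}+\matr{x}_i^\top\matr{B}_j,\Sigma_{jj})$, i.e.\ $\varphi(o_{ij}+\matr{x}_i^\top\matr{B}_j,\Sigma_{jj})$. You also correctly handle the two points that need care, namely the degeneracy $W_{ij}=0$ a.s.\ when $Y_{ij}>0$ and the fact that the prior independence of $W_{ij}$ and $Z_{ij}$ lets you integrate $Z_{ij}$ against its marginal Gaussian law.
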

In \Cref{sec:var-prox} we made the variational approximation $\pt(W_{ij})
    \sim \mathcal B (P_{ij})$, considered $P_{ij}$ as free and optimized the
    ELBO with respect to $P_{ij}$.  The above proposition suggests that $P_{ij}$
    can instead be derived directly from $\ta$ and not be considered as a free
    variational parameter.  We consider $\tJun$ (resp. $\tJdeux$) the ELBO $\Jun$ (resp. $\Jdeux$) with
    $P_{ij} = \Psi(\ta)_{ij}$ with
    \begin{multline*}
        \Psi(\ta) \triangleq \\ \frac{\matr \pi}{
    \varphi\left(\matr O + \matr X\tr \matr B, \matr 1_n \diag(\matr \Sigma)\tr \right)\odot \left(1 -
\matr \pi\right) + \matr \pi}\odot \boldsymbol 1_{\boldsymbol Y = \matr 0},
    \end{multline*}
where $\varphi$ and the division are applied component-wise and $\boldsymbol 1_{\boldsymbol Y
    = \matr 0}$ is a $n\times p$ matrix such that $\left(\boldsymbol
    1_{\boldsymbol Y } = 0\right)_{ij} = 0$ if and only if $Y_{ij} = 0$.
Formally, we have
\begin{multline*}
    \tJun(\matr M, \matr S,\matr \Omega, \matr B, \matr B^0) = \\ 
    \Jun\left(\matr M, \matr S, \Psi(\matr \Omega, \matr B, \matr B^0), \matr \Omega, \matr B, \matr B^0\right),
\end{multline*}
and the same formula applies to $\tJdeux$. Note that both $\tJun$ and $\tJdeux$
have $np$ fewer variational parameters compared to $\Jun$
and $\Jdeux$ ($2np$ compared to $3np$) since $\matr P$ is now completely determined by $\ta$. The function $\varphi$ is intractable but a sharp (derivable)
approximation $\tilde \varphi$ is available and detailed in the next section. For the Standard approximation, a
major drawback of this approach compared to optimizing $\Jun$ is the lack of any
closed form update as stationary points of $\tilde \varphi$ are
intractable. For the optimization, we consider the gradient scheme defined in
\Cref{eq:upate_J2} where $\psi$ is replaced with $\psi_1 = \left(\matr M,
\matr S \right)$.
\subsection{Implementation details}
\paragraph*{Gradient with respect to $\matr \Omega$ }
When no closed form is available, optimization with
respect to $\matr \Omega$ must be adapted to ensure that $\matr \Omega$ remains symmetric and
positive definite. Instead of maximizing directly over $\matr \Omega$, we introduce a $p \times p$ unconstrained
matrix $\matr C$ and use the following parametrization for $\matr \Omega$
\begin{equation} \label{eq:Omega_trick}
    \matr \Omega = (\matr C \matr C\tr) ^{-1}
\end{equation} and compute
the gradient with respect to $\matr C$.
\paragraph*{Approximation of $\varphi$}
The function $\varphi$ defined in \Cref{prop:law_W} is intractable but an approximation \citep{lognormcharact} can be computed:
\begin{equation*}
    \varphi(\mu, \sigma^2)\approx \tilde \varphi(\mu, \sigma^2)=  \frac{\exp \left(-\frac{W^2\left(\sigma^2
    e^\mu\right)+2 W\left(\sigma^2 e^\mu\right)}{2
\sigma^2}\right)}{\sqrt{1+W\left(\sigma^2 e^\mu\right)}},
\end{equation*}
where $W(\cdot)$ is the Lambert function (i.e. $z = x \exp(x) \Leftrightarrow x = W(z),
~~ x,z \in \mathbb R$). An analysis of its sharpness is performed in \cite{sharpvarphi}. Derivability of $\tilde \varphi$ is ensured as $W(\cdot)$ is derivable.
\paragraph*{Stochastic Gradient Ascent}

When $n$ is large, computing the whole gradient $\nabla_{\psi} J$ is
time-consuming. As the ELBOs defined in \Cref{prop:elbo} are additive in the
variational parameters $\psi$, Stochastic Gradient Ascent \citep{SGD} can be applied to scale the algorithm to large datasets.

\section{Simulation Study}
\label{sec:sim-study}
In this section, we evaluate the statistical and computational performances of Models \ref{eq:zi-models-nd}, \ref{eq:zi-models-cd} and \ref{eq:zi-models-rd} on simulated data.
\subsection{Experimental details}\label{sec:expdetails}
We set $n = 1000, p = 250, d =3 $ (and $d_0=4$ for \ref{eq:zi-models-cd},\ref{eq:zi-models-rd}) to mimick typical sizes observed in microbiome studies and in particular in our case study. Given $\ta^{\star} = \left( \matr
\Sigma^{\star}, \matr {B}^{\star}, \matr{\pi}^\star \right)$  for Model \ref{eq:zi-models-nd}, $\ta^{\star} = \left( \matr
\Sigma^{\star}, \matr {B}^{\star}, \matr B^{0^\star} \right)$  for Model \ref{eq:zi-models-cd} and $\ta^{\star} = \left( \matr
\Sigma^{\star}, \matr {B}^{\star}, \matr {\bar{B}}^{0^\star} \right)$  for Model \ref{eq:zi-models-rd}, we
simulate $n$ independent observations $\matr Y_i$ for each model, and consider the following estimation strategies:
    \begin{itemize}
        \item Standard $(\Jun)$,
        \item Enhanced $(\Jdeux)$,
        \item Standard Analytic $(\tJun)$,
        \item Enhanced Analytic $(\tJdeux)$,
        \item PLN,
        \item Oracle PLN,
    \end{itemize}
    where PLN competitor corresponds to a PLN model fitted directly on the zero-inflated data
    matrix $\matr Y$ whereas Oracle PLN is fitted on the non-inflated data $\matr T$, where $T_{ij}|Z_{ij} \sim \mathcal P \left(  \exp(O_{ij} + Z_{ij}) \right)$. While the former evaluates model performance without
    considering zero-inflation during modeling, the latter serves as a
    reference for the Poisson component, as it is unaffected by the signal
    degradation caused by zero-inflation.
To assess inference quality, we report the following metrics:
    \begin{itemize}
        \item Root Mean Squared Error (RMSE) between true parameters $\ta^{\star}$ and estimates $\widehat \ta$, as well as between true zero-inflation probabilities $\matr{\pi}^{\star}$ and estimated probabilities $\hat{\matr \pi}$, computed as $\frac{1}{N} \sum_{i=1}^N \|\widehat{\ta}_i - \ta^{\star}_i\|_2$ and $\frac{1}{N} \sum_{i=1}^N \|\hat{\matr \pi}_i - \matr{\pi}^{\star}_i\|_2$ where $N$ is the number of simulations for each simulation setting;
        \item Reconstruction error, computed as the RMSE between the original data matrix $\matr Y$ and the reconstructed data matrix $\widehat{\matr Y}$ or $\frac{1}{N} \sum_{i=1}^N \|\hat{\matr Y}_i - \matr{Y}_i\|_2$ where $N$ is the number of simulations for each simulation setting;
        \item ELBO;
        \item Computation time.
    \end{itemize}
    The results pertaining to the reconstruction error and computation are deferred to \Cref{sec:appendixB} for detailed analysis.

    We investigate how the models respond to fluctuations in the zero-inflation
    probability $\matr \pi^{\star}$ (\Cref{ssec:pi_fluct}) and fluctuations in the mean
    $\matr X \matr B^{\star}$ of the Gaussian component $\matr Z$
    (\Cref{ssec:xb_fluct}). In the former, we explore whether an increase in
    zero-inflation probability results in better estimation of the
    zero-inflation parameter $\matr{\pi}^\star$ (or $\matr B^{0^\star}$, $\matr
    B^{0^\star}$) at the expense of degraded estimation for the PLN parameters
    $(\matr{\Sigma}^\star, \matr B^{\star})$. In the latter, we assess the
    model accuracy in challenging scenarios where $\matr X \matr B^{\star}$
    is small, leading to numerous zeros in both the ZI and PLN components.
    Additionally, in \Cref{ssec:n_fluct}, we examine the performance
    enhancement as the sample size $n$ increases.

The covariance matrix $\matr \Sigma^{\star}$ is defined as a random Toeplitz matrix:
\begin{equation*}
    \matr \Sigma_{kj}^{\star} = \alpha^{|j-k|}, \, \alpha \sim \mathcal U([0.7,0.9]), \, (j,k) \in \{1,\dots,p\}^2,
\end{equation*}
where $\mathcal U$ stands for the uniform distribution.
The random parameter $\alpha$ controls the amount of correlation between variables.
The \( n \times d \) parameters \(\matr{X}\), \(\matr{X}^0\), and
\(\bar{\matr{B}}^{0^{\star}}\) consist of independent rows (denoted as \( x_i
\)), each following a multinomial distribution where \(\mathbb{P}(x_{ir} = 1) =
1/d\) for \( r \in \{1, \dots, m\} \). Notably, none of these matrices includes
an intercept term.
All entries of $\matr B^{\star}$ are independently sampled from a Gaussian 
distribution with a mean $\gamma \in \mathbb R$ and a variance of $1$, ensuring 
that $\matr X \matr B^{\star}$ exhibits independent Gaussian entries 
centered on $\gamma$ with unit variance.Parameters $\matr B^{0^\star}$ and 
$\bar{\matr X}^0$ follow the same generation process, except that the Gaussian 
mean is determined by $\logit(\rho)$ for some $0 < \rho < 1$. A larger $\rho$ 
corresponds to increased zero-inflation, while a larger $\gamma$ indicates a 
larger Poisson Log-Normal (PLN) component. Specific values for $\rho$ and 
$\gamma$ are provided in each subsection. Notably, offsets ($\matr O$) are not 
considered in these simulations, and are set to a zero matrix of dimensions $n 
\times p$.

\subsection{Simulations when $\matr \pi^{\star}$ fluctuates}\label{ssec:pi_fluct}

The parameter $\gamma$ is assigned a value of $\gamma = 2$ to introduce a
moderately large Poisson Log-Normal (PLN) component, characterized by a low
probability (6.5\%) of generating zeroes. To regulate the degree of
zero-inflation, we systematically increase the probability of zero-inflation
from 0.2 to 0.9, with increments of 0.1. This results in
$\matr \pi^{\star}$ taking values in $\Pi \triangleq \{0.2, 0.3, \dots, 0.8, 0.9\}$.
For Model \ref{eq:zi-models-nd}, this adjustment is straightforward, as
$\matr \pi^{\star}$ is directly set to values within $\Pi$. In the case of Model
\ref{eq:zi-models-cd} (and Model \ref{eq:zi-models-rd}), the control of
zero-inflation is achieved by simulating $\matr B^{0^{\star}}$ (or $\bar{\matr
X}^{0}$), as described previously, with $\rho$ taking values from $\Pi$.
Following this methodology, we generate, for each $\rho$ in $\Pi$, 30 distinct 
parameter sets $\theta$, resulting in a total
of  $8\times 30$  distinct parameter sets $\theta$ for each model. For each
$\theta$, we simulate $\matr Y$ according to \Cref{eq:ZIPLN-model}. We
use this simulation scheme (rather than the alternative of sampling a single
value of $\theta$ for each parameter set and then simulating 30 replicates of $\matr Y$ for
that $\theta$) to average results over values of $\theta$ and avoid sampling an
easy or a difficult $\theta$ by chance, which would blur the trends and make
results harder to interpret. The obtained results are depicted in
\Cref{fig:proba_stat}.

Interpreting the results for $\matr \pi$ poses certain challenges. At low
values of $\matr \pi^{\star}$, the inflated component generates few zeros,
resulting in a clear distinction between zeros from the inflated and Poisson
components, albeit with a lower overall count of zeros from the inflated
component. In contrast, at high values of $\matr \pi^{\star}$, the inflated
component produces more zeros, increasing the overall
prevalence of observed zeros but making it more difficult to distinguish their
origin.

The RMSE of $\matr \Sigma^{\star}$ exhibit the same qualitative results across
models. The Standard Variational Approximation (VA) outperforms other VA
methods but falls short, as expected, of Oracle PLN, which is unaffected by $\matr
\pi^{\star}$ as it works on non-inflated counts.
Standard and all other methods, with the notable exception of Standard
Analytic, exhibit performance degradation as $\matr{\pi}^{\star}$ increases.
All methods appear to perform similarly when $\matr{\pi}^{\star}$ approaches
$1$.

Regarding $\matr \pi^\star$, results from Model
\ref{eq:zi-models-rd} (\Cref{fig:proba_stat}, middle right panel) indicate a progressive decrease in RMSE until $\matr
\pi^{\star}$ reaches the $0.7$ threshold, past which the RMSE starts to increase.
This trend reflects the delicate balance in model performance: when $\matr
\pi^{\star}$ is too small, there are too few observed zeroes to reliably estimate $\matr
\pi^{\star}$, leading to high RMSE. Conversely, high values of $\matr
\pi^{\star}$ result in an overabundance of zeroes and a loss of signal,
adversely impacting model performance. Intermediate values of $\matr
\pi^{\star}$ lead to the best performance, similar to what is observed in
logistic regression. By contrast, results from Model \ref{eq:zi-models-cd} (\Cref{fig:proba_stat},
middle panel) reveal a gradual reduction in RMSE across all methods, except for Standard which
nevertheless demonstrates the best performance except for very high values of $\matr
\pi^{\star}$. In the case of Model \ref{eq:zi-models-nd} (\Cref{fig:proba_stat}, middle left panel),
both Enhanced VA methods exhibit
improved performance as $\matr \pi^{\star}$ increases. However, Enhanced Analytic
reaches a plateau when $\matr \pi^{\star} \geq 0.5$. This plateau corresponds to a
scenario characterized by a high incidence of observed zeroes and a low signal
from the Poisson component, that might justifying the observed stabilization in
performance. Note also that $\matr \pi^{\star}$ is as expected much easier to
estimate in Model \ref{eq:zi-models-nd} than in models \ref{eq:zi-models-cd}
and \ref{eq:zi-models-rd}, as shown by the difference in scales. The RMSE of
$\matr B$ shows a slight increase with $\matr \pi^{\star}$ across all methods,
with the exception of Standard Analytic, just like the RMSE for $\matr \Sigma^{\star}$.
Standard once again matches or surpasses all methods except the oracle PLN.

\begin{figure*}[h]
            \includegraphics[width=\linewidth]{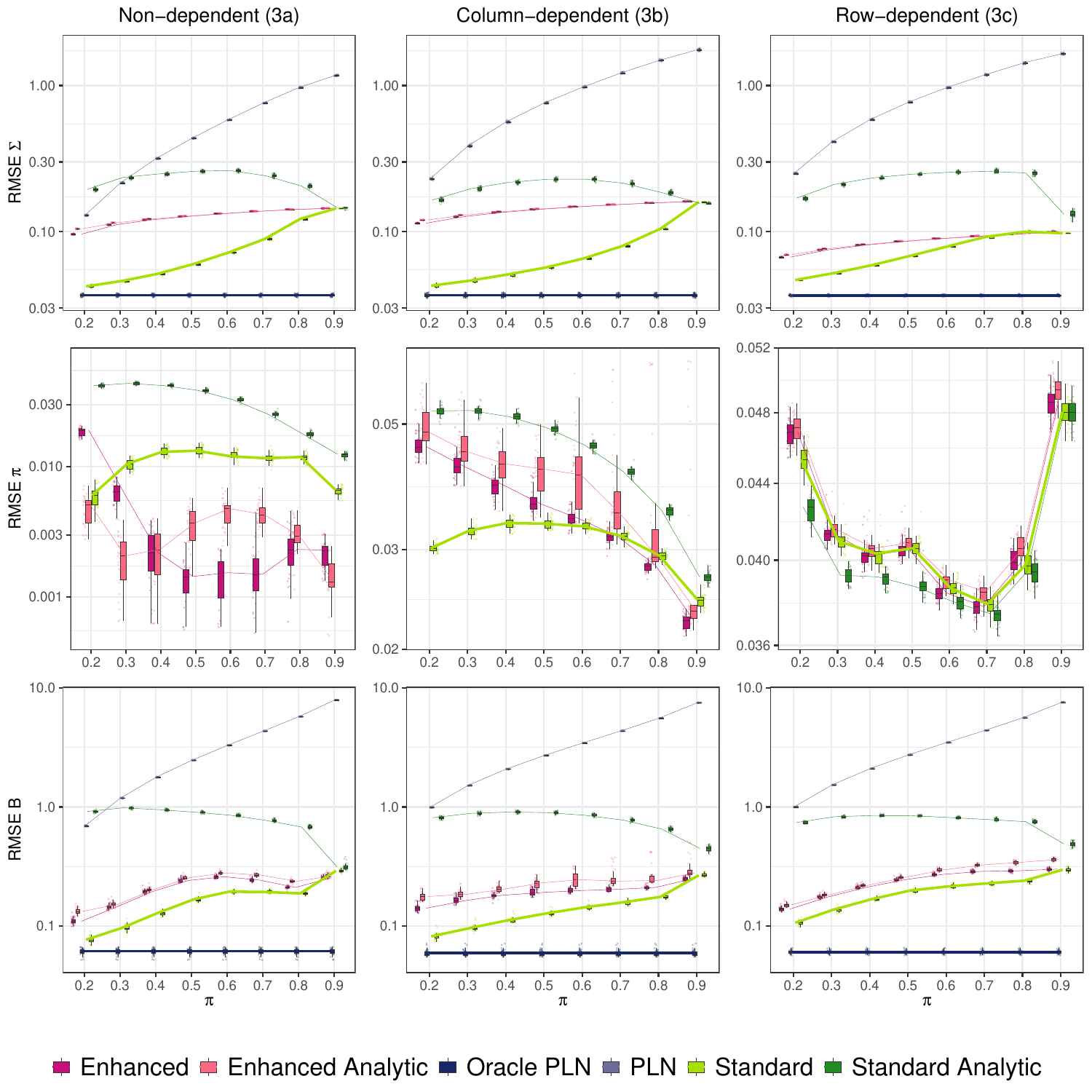}
            \caption{Simulations when the probability of zero-inflation $\matr 
\pi^{\star}$ varies. To improve visibiliy, the scale is \textbf{not} shared on 
the second row. The line corresponding to Oracle PLN (resp. Standard) is 
highlighted as the oracle model (resp. the best performing model).}
            \label{fig:proba_stat}
\end{figure*}

\subsection{Simulations when $\matr X \matr B^{\star}$ fluctuates}\label{ssec:xb_fluct}

In Model \ref{eq:zi-models-nd}, we fix $\matr \pi^{\star}$ at $0.3$, a value chosen
based on its propensity to produce moderately challenging models and yield
contrasting results in previous experiments. For Model \ref{eq:zi-models-cd}
(and Model \ref{eq:zi-models-rd}), we simulate the parameters $\matr
B^{0^{\star}}$ (and $\bar{\matr X}^{0}$) as previously described, setting $\rho
= 0.3$. To systematically enhance the signal of the Poisson component, we increment
$\gamma$ from 0 to 3 in steps of 0.5, thereby covering values in $\Gamma
\triangleq \{0, 0.5, \dots, 2.5, 3\}$. Following this methodology, we generate, 
for each $\gamma$ in $\Gamma$, 30 disinct parameter sets $\theta$, resulting in 
a total
of  $7\times 30$  distinct sets $\theta$ for each model. For each $\theta$, we 
simulate $\matr Y$ according to \Cref{eq:ZIPLN-model}.
The obtained results are depicted in \Cref{fig:poisson_stat}.

Regarding the RMSE of $\matr \Sigma^{\star}$, all methods demonstrate stable performance across models.
Standard and all other methods, with the exception of Standard Analytic and 
Oracle PLN, exhibit increased performance as $\matr X \matr B^{\star}$ 
increases. Likewise, both Standard Analytic and PLN exhibit an increase in RMSE 
when $\matr X \matr B$ ncreases. On the other hand, the RMSE of Standard always 
decreases with $\matr X \matr B$, similarly to Oracle PLN.

This behavior is expected: when the mean \( \matr{X} \matr{B} \) is low, the
signal diminishes significantly, rendering the covariance structure challenging
to observe and estimate due to the uniformly low values.

Regarding $\matr \pi^{\star}$, all methods provide better estimations when 
$\matr X \matr B$ increases. With larger magnitudes of $\matr X \matr B$, the 
Poisson component generates fewer zeros, leading to clearer identification of 
the origin of zeros. All methods yield comparable results; however, Standard 
once again matches or outperforms other methods.

Finally regarding $\matr B^{\star}$, Standard, Enhanced Analytic, and Enhanced exhibit
stable and comparable results across all three models: they provide better
estimates when $\matr X \matr B$ increases, as expected and for the same
reasons as outlined above. Among those, Standard demonstrates slightly better
performance. As expected, Oracle PLN outperforms all methods as it is fitted on
non inflated data and can rely on more information to estimate $\matr B$
whereas PLN performs increasingly worst as the coefficient are biased downward
by the zeroes.


\begin{figure*}[h]
            \includegraphics[width=\linewidth]{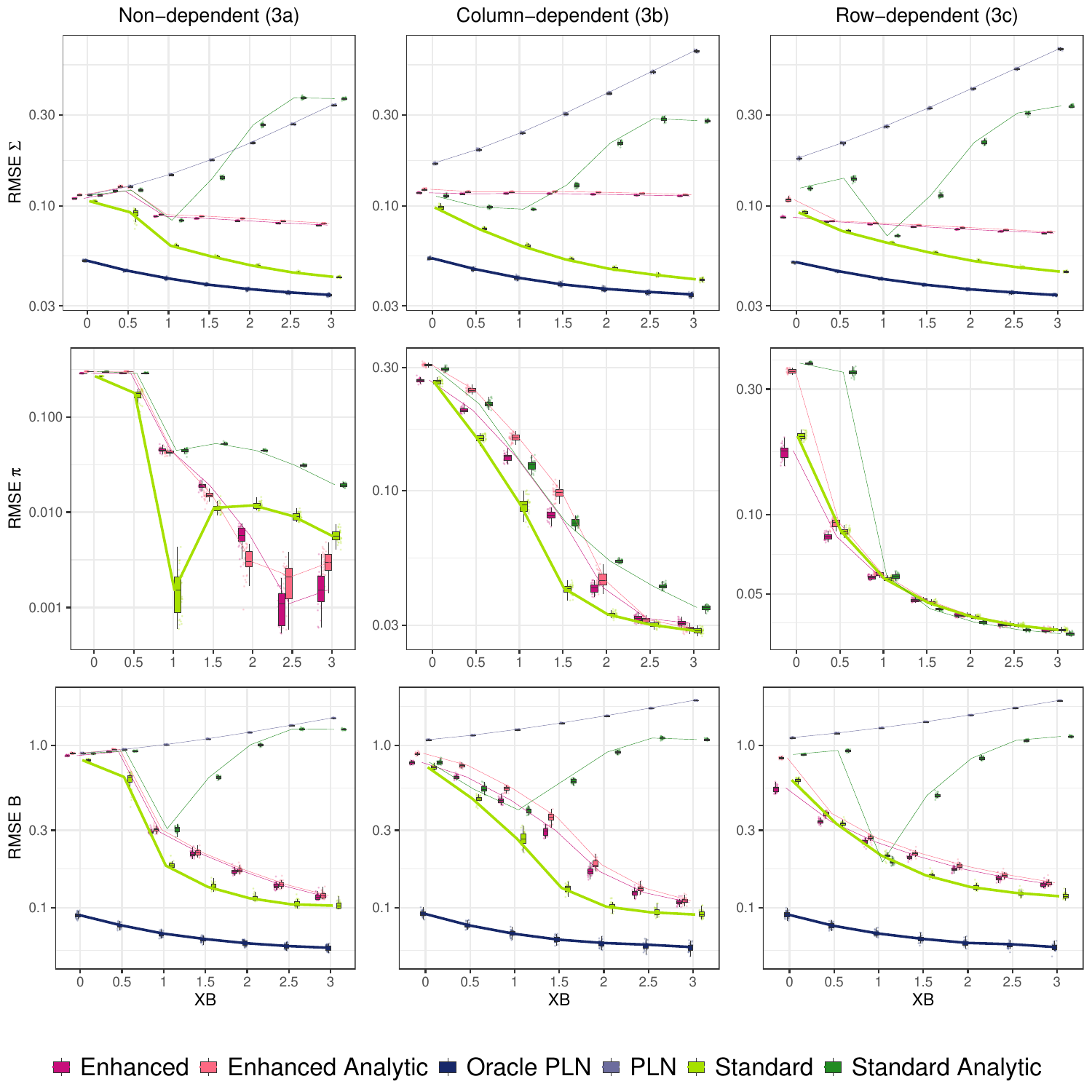}
            \caption{Simulations when the magnitude of the gaussian component 
$\matr X \matr B$ varies. To improve visibiliy, the scale is \textbf{not} shared 
on the second row. The line corresponding to Oracle PLN (resp. Standard) is 
highlighted as the oracle model (resp. the best performing model).}
            \label{fig:poisson_stat}
\end{figure*}
\subsection{Simulations when $n$ fluctuates}\label{ssec:n_fluct}

The parameter $\gamma$ is maintained at $\gamma = 2$, as discussed in Section
\ref{ssec:pi_fluct}, while the zero-inflation parameter $\matr \pi^{\star}$
remains set at $0.3$, consistent with the conditions outlined in Section
\ref{ssec:xb_fluct}. Furthermore, the number of variables is held constant at
$p=250$. To investigate the impact of the sample size, we incrementally vary 
$n$ from $100$ to $600$, with increments of $100$, forming the set
$N={100,200,\dots,600}$. For each value in $N$, we generate $30$ distinct parameter sets $\theta$,
resulting in a total of $30 \times 6$ different parameter combinations for each 
model. Subsequently, we simulate datasets $\matr Y$ for each parameter $\ta$. 
The results are presented in detail in Figure \ref{fig:samples_stat}.

For the Enhanced VA, the addition of more samples does not yield a substantial
improvement in RMSE. Hence, further exploration into this aspect is deemed
unnecessary. Notably, it performs comparably to other methods when $n$ is
low.

The RMSE of $\matr \Sigma^{\star}$ generally decreases, as expected, with $n$,
except for Standard Analytic and at slower rater for Enhanced (analytic
or not).
Similarly, the RMSE of $\matr \pi$ decreases with $n$ across all methods,
with a particularly notable improvement in Model \ref{eq:zi-models-cd}.

In Model \ref{eq:zi-models-nd}, the decrease is less pronounced for all VA
methods, with a plateau after $n \geq 250$.
Analyzing Model \ref{eq:zi-models-rd} with respect to $\matr \pi^{\star}$ presents
challenges due to the increasing number of parameters of $\matr B^0$ with $n$,
influencing $\matr \pi^{\star}$. The RMSE with respect to $\matr B^{\star}$ displays a
consistent decrease for all VA methods, except the Enhanced  and Standard
Analytic ones. Standard Analytic reaches a plateau when the sample size
$n$ reaches 250. Both Standard and Enhanced Analytic exhibit similar 
performances,
outperforming Standard Analytic by a considerable margin.

As a general trend, it appears that the Enhanced Analytic and
Enhanced methods outperform the Standard method on the low samples regime ($n < p=250$).
This can be explained by the fact that these methods are regularized and thus
more robust to overfitting, which is a common issue in low sample regimes.
However, as the sample size increases, the Standard method outperforms the
Enhanced methods. This is likely due to the fact that the Standard method
has more flexibility and can better adapt to the data when there are enough
samples.

\begin{figure*}[btp]
    \includegraphics[width=\linewidth]{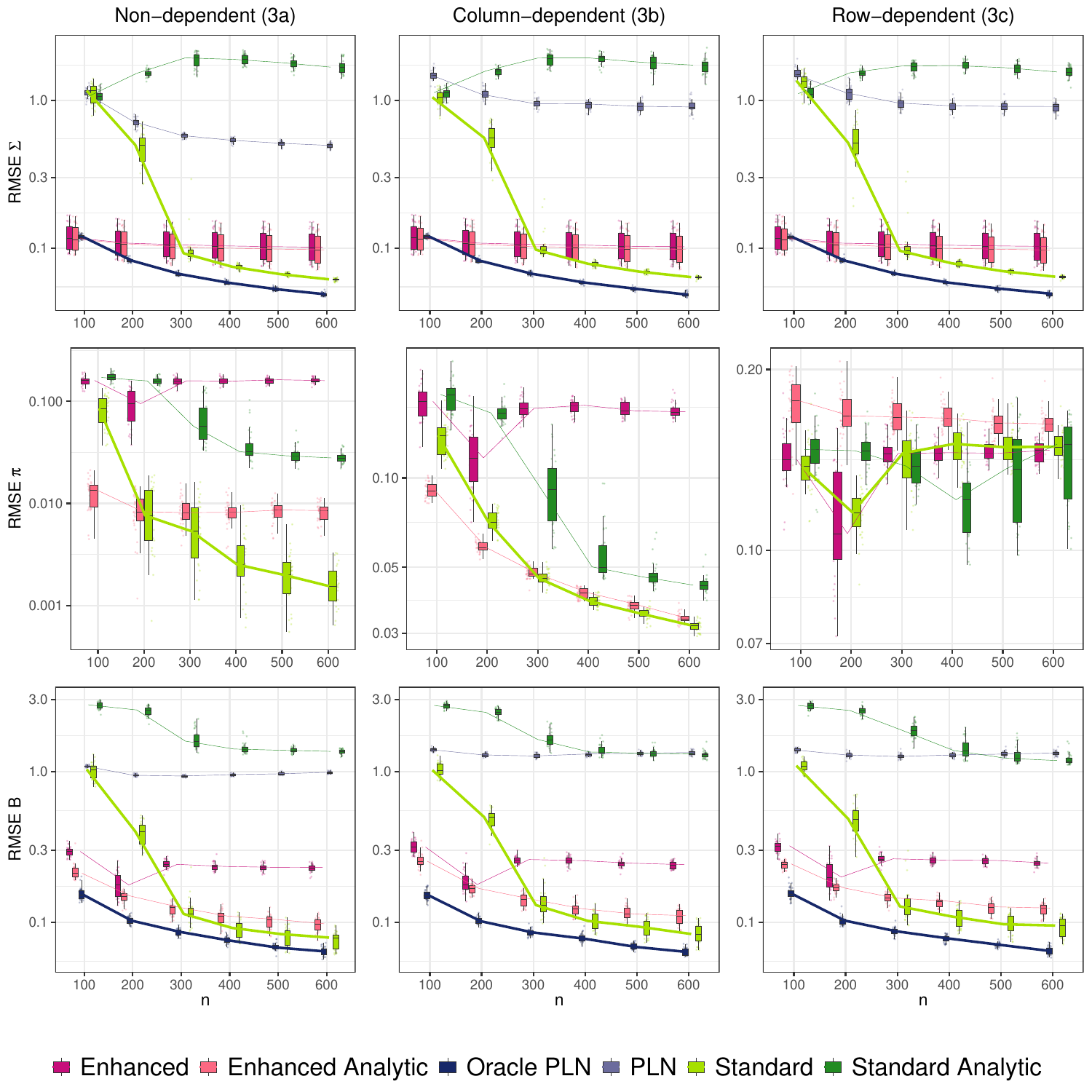}
    \caption{Simulations when the number of samples $n$ grows. To 
improve visibiliy, the scale is \textbf{not} shared on the second row. The line 
corresponding to Oracle PLN (resp. Standard) is highlighted as the oracle model 
(resp. the best performing model).}
    \label{fig:samples_stat}
\end{figure*}

Table \ref{tab:elbos} presents the ELBOs for $n=\{100,300,500\}$, showcasing only the highest-performing methods identified from \Cref{fig:samples_stat,fig:poisson_stat,fig:proba_stat}: Enhanced Analytic
and Standard. The parameter $\gamma$ is fixed at $\gamma = 2$, following the
conditions outlined in Section \ref{ssec:pi_fluct}, while the zero-inflation
parameter $\matr \pi^{\star}$ remains set at $0.3$, consistent with the
specifications in Section \ref{ssec:xb_fluct}. Additionally, the dimensionality
is maintained at $p=250$. For each sample size $n$, 30 distinct parameter sets
$\theta$ are generated, resulting in a total of $30 \times 3$ different
parameter combinations. Subsequently, a dataset $\matr Y$ is simulated for each
parameter $\ta$. The ELBO is computed as the mean across the 30 different runs,
accompanied by a $95\%$ confidence interval.

\begin{table*}[btp]
    \begin{center}
        \includegraphics[width=\linewidth]{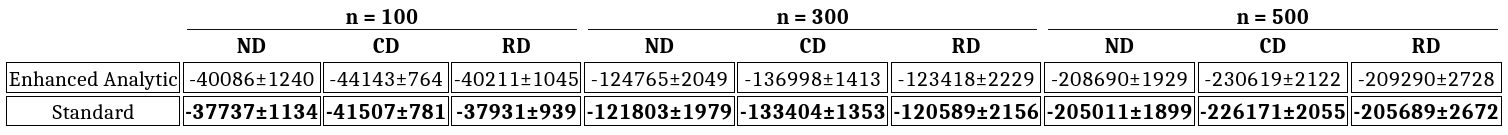}
        \caption{\label{tab:elbos}Comparison of ELBOs (higher the better) when the number of samples grows (ND: non-dependent \ref{eq:zi-models-nd}, CD: column-dependent \ref{eq:zi-models-cd}, RD: row-dependent \ref{eq:zi-models-rd}).}
    \end{center}
\end{table*}

\paragraph*{Conclusion of the study}

The Standard method exhibits superior performance across various scenarios,
achieving lower RMSEs and higher ELBO values. However, in regimes with limited
sample sizes, the Enhanced and Enhanced Analytic methods outperformed the
Standard method. This improvement is likely attributable to the regularization
imposed on the parameters, resulting in a more constrained distribution of the
variational parameters.

\section{Application to cow microbiome data}
\label{sec:application}

We now consider a study on the structure and evolution of the microbiota of 45 lactacting cows
before and after calving \citep{Mariadassou2023}. In this experiment, three body
sites in addition to the milk of the 4 teats were sampled (buccal, vaginal, 
nasal)
at 4 times points: 1 week before calving (except for the milk), 1 month, 3 months and 7 months
after calving. The data include $n = 921$ samples with sequencing depths ranging from 1,003 to
81,591 reads. After \citeauthor{Mariadassou2023}'s preprocessing, a total of 1209
Amplicon Sequence Variants (ASV) were identified using the FROGS pipeline \citep{frogs}
based on DADA2 \citep{Callahan2016} and taxonomy was assigned using reference databases.
We filtered out ASV with prevalence lower than 5\% and removed samples for which the total
count was zero, resulting in a count table of $n=880$ samples with $p=259$ 
ASV and a mean proportion of zeroes of 90.3\%.

Previous analyses have already shown that body site and sampling time are strong
structuring factors of the microbiota. We are interested
in comparing the results of PLN and ZIPLN on this dataset, in particular how
the explicit modeling of the Zero-Inflation changes (i) the fit of low value
counts and (ii) the position and clustering of samples in the latent space.

\subsection{Model selection}
We considered a total of 17 models, either without zero inflation (PLN, 4 models)
or with zero inflation (ZIPLN, 13 models). The four PLN
models correspond to different covariate configurations:
intercept-only (no covariates), site only, time
only, and a full model including both site and time
effects. For the ZIPLN models, we considered thirteen
configurations: one model without covariates in either the
PLN or zero-inflation (ZI) components; three models with
covariates only in the ZI component (and none in the PLN
component); and all remaining possible combinations of
covariates between the ZI and PLN components. Based on the
results of \Cref{sec:sim-study}, each ZIPLN model was
fitted with the Standard VA ($\Jun$). We compared model
fits using AIC, BIC, and ICL, as summarized
in~\Cref{tab:model-criterion}.

As expected, BIC tends to
favor more parsimonious models than AIC and ICL. AIC
generally selects denser models across all model types
(PLN, ZI, and ZIPLN), often including both site and time
effects in each component. ICL shows a less consistent
pattern, as it accounts for uncertainty in the latent
variables of both the ZI and PLN components, but it also
tends to prefer denser models. Overall, the model
comparison highlights that accounting for zero inflation
-- particularly the effects of site, time, and their
interaction on it -- has a greater impact than modeling
these covariates in the count component alone, leading to
substantial improvements in penalized likelihood values.
For the subsequent analyses, we focus on the full models:
the PLN model with the site $\times$ time interaction, and
the ZIPLN model with the site $\times$ time interaction in
both components. Although the latter is not selected by
AIC, it is retained for ease of comparison and visualization.

\begin{table*}[h]
\fontsize{10pt}{12.5pt}\selectfont
\begin{tabular*}{\linewidth}{@{\extracolsep{\fill}}rllrrrrr}
\toprule
 & \multicolumn{3}{c}{Model} & \multicolumn{4}{c}{Criterion} \\
\cmidrule(lr){2-4} \cmidrule(lr){5-8}
  & ZI & PLN & \#param & Loglik & BIC & AIC & ICL \\
\midrule\addlinespace[2.5pt]
\multicolumn{8}{l}{PLN} \\[2.5pt]
\midrule\addlinespace[2.5pt]
 &  & \textasciitilde{}1 & 33929 & -219757.4 & {\cellcolor[HTML]{9A2515}{\textcolor[HTML]{FFFFFF}{-334775.4}}} & -253686.4 & -839241.5 \\
 &  & \textasciitilde{} time & 34706 & -217954.9 & -335606.9 & -252660.9 & {\cellcolor[HTML]{9A2515}{\textcolor[HTML]{FFFFFF}{-837857.8}}} \\
 &  & \textasciitilde{} site & 34706 & -217321.5 & -334973.5 & -252027.5 & -842579.4 \\
 &  & \textasciitilde{} site:time & 37555 & -213480.1 & -340790.1 & {\cellcolor[HTML]{9A2515}{\textcolor[HTML]{FFFFFF}{-251035.1}}} & -844789.3 \\
\midrule\addlinespace[2.5pt]
\multicolumn{8}{l}{ZI} \\[2.5pt]
\midrule\addlinespace[2.5pt]
 & \textasciitilde{}1 &  & 34188 & -212293.7 & -328189.6 & -246481.7 & -634481.0 \\
 & \textasciitilde{} time &  & 34965 & -207744.1 & -326274.1 & -242709.1 & -628298.5 \\
 & \textasciitilde{} site &  & 34965 & -203268.7 & {\cellcolor[HTML]{9A2515}{\textcolor[HTML]{FFFFFF}{-321798.7}}} & -238233.7 & -618514.2 \\
 & \textasciitilde{} site:time &  & 37814 & -194612.1 & -322800.1 & {\cellcolor[HTML]{9A2515}{\textcolor[HTML]{FFFFFF}{-232426.1}}} & {\cellcolor[HTML]{9A2515}{\textcolor[HTML]{FFFFFF}{-614484.2}}} \\
\midrule\addlinespace[2.5pt]
\multicolumn{8}{l}{ZIPLN} \\[2.5pt]
\midrule\addlinespace[2.5pt]
 & \textasciitilde{}time & \textasciitilde{}time & 35742 & -207437.7 & -328601.7 & -243179.7 & -624174.2 \\
 & \textasciitilde{}time & \textasciitilde{}site & 35742 & -203564.1 & -324728.1 & -239306.1 & -627213.9 \\
 & \textasciitilde{}site & \textasciitilde{}site & 35742 & -202378.3 & -323542.2 & -238120.3 & -620865.7 \\
 & \textasciitilde{}site & \textasciitilde{}time & 35742 & -201848.1 & {\cellcolor[HTML]{9A2515}{\textcolor[HTML]{FFFFFF}{-323012.1}}} & -237590.1 & -610669.3 \\
 & \textasciitilde{}time & \textasciitilde{}site:time & 38591 & -198613.1 & -329435.1 & -237204.1 & -619692.0 \\
 & \textasciitilde{}site & \textasciitilde{}site:time & 38591 & -196311.1 & -327133.0 & -234902.1 & -611070.3 \\
 & \textasciitilde{} site:time & \textasciitilde{} time & 38591 & -194802.1 & -325624.1 & -233393.1 & -609588.2 \\
 & \textasciitilde{} site:time & \textasciitilde{} site & 38591 & -194090.6 & -324912.6 & {\cellcolor[HTML]{9A2515}{\textcolor[HTML]{FFFFFF}{-232681.6}}} & -614143.3 \\
 & \textasciitilde{} site:time & \textasciitilde{} site:time & 41440 & -191428.4 & -331908.3 & -232868.4 & {\cellcolor[HTML]{9A2515}{\textcolor[HTML]{FFFFFF}{-607483.7}}} \\
\bottomrule
\end{tabular*}
\caption{\label{tab:model-criterion} Model selection criteria (BIC, AIC and ICL) and model details for the 17 ZIPLN models with (ZI and ZIPLN) or without (PLN) zero inflation fitted to the cow microbiota dataset.}
\end{table*}

\subsection{Modeling of counts}

Comparing the model fits of PLN and ZIPLN (\Cref{fig:microcosm_model_fit}) show
that both models are good at fitting the counts along the observed range
(left panel) with a small upward bias for ZIPLN for low counts. However, when
focusing on observed zeroes, we find out that ZIPLN predicts lower values than
PLN (middle panel). The bimodality observed for ZIPLN fitted values corresponds
to a mixture of two populations (right panel): on the one hand, counts for
which both the variational probability of zero-inflation
$P_{ij}$ is close to 1 and the latent mean $M_{ij}$
is very negative and on the other hand counts for which
$M_{ij}$ is much higher but compensated by a high value of $P_{ij}$. The
first population corresponds to zeroes that could be well-captured by a PLN
model without zero inflation whereas the second one corresponds to zeroes which
are only fitted well thanks to the zero inflation component.

\begin{figure*}[btp]
\includegraphics[width=\linewidth]{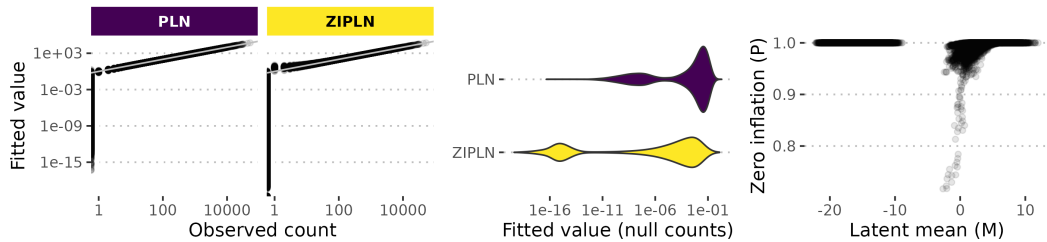}
\caption{\label{fig:microcosm_model_fit} Model fits of PLN and ZIPLN in terms of fitted versus observed counts (left panel), fitted values for null counts (middle panel) and comparaison of $P_{ij}$ and $M_{ij}$ estimated for null counts by ZIPLN (right panel).}
\end{figure*}

Looking at the variational parameters estimated by ZIPLN (\Cref{fig:microcosm_parameters}), we find that the different microbiota are highly structured in blocks of species that are only prevalent and abundant in specific sites at specific times. In particular, few ASVs are detected in the oral microbiota (negative $\matr{M}$ and $\matr{P}$ close to $1$, corresponding to mostly null counts). The blue bands in $\matr{M}$ indicate ASVs likely to be structurally absent from this microbiota. In constrast, the bright yellow blocks in $\matr{P}$, and the corresponding ones in $\matr{M}$, highlight ASVs that are systematically present and abundant in the oral microbiota 1 week after calving ($\matr{P}$ close to $0$ and high values of $\matr{M}$ across all samples in that category). Finally, many ASVs have large positive $\matr{M}$ values in the milk microbiota but a heterogeneous pattern of zero inflation across samples, corresponding to ASVs that are not systematically found in the milk but abundant when present. This is in line with the documented \citep{Mariadassou2023} high diversity and large biological diversity observed in the milk compared to microbiota from other body sites.

\begin{figure*}[btp]
\includegraphics[width=\linewidth]{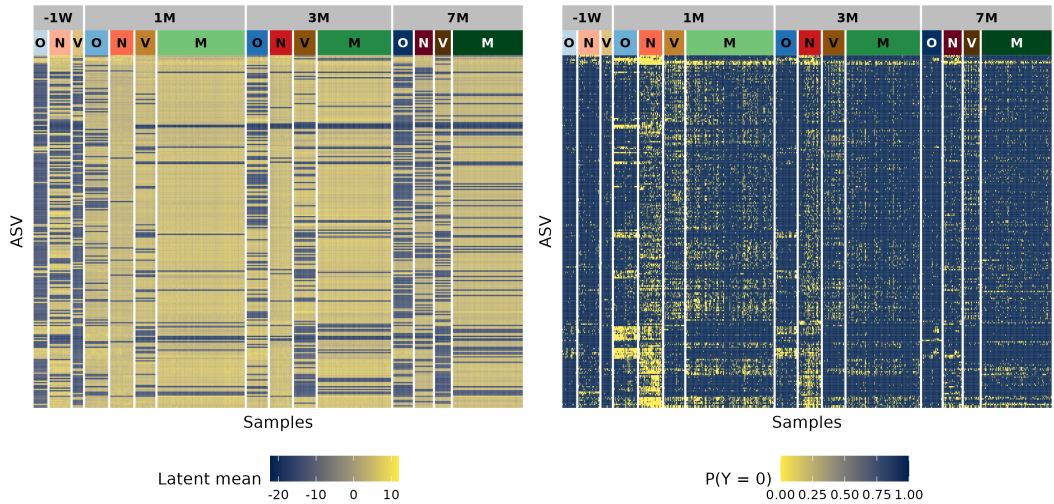}
\caption{\label{fig:microcosm_parameters} Variational latent means $\matr{M}$ (left) and zero inflation probability $\matr{P}$ (right) estimated by ZIPLN. The ASVs are ordered using a hierarchical clustering (Ward linkage) on the centered-log-ratio-transformed observed counts. The first grouping factor is sampling time: 1 week before (-1W), 1 month (1M), 3 months (3M) and 7 months (7M) after calving and the second is the type of microbiota: oral (O), nasal (N), vaginal (V) or from the milk (M).}
\end{figure*}

\subsection{Latent means}

Finally, we focus on the similarity of samples in the latent space. We do so
with a PCA of the latent means $\matr{M}$ (\Cref{fig:microcosm_pca}) inferred
by PLN (left panel) and ZIPLN (right panel). The results of both models are
quite similar with a strong stratification of microbiota according to body site
along the diagonal and according to time along the antidiagonal. Likewise, the
site $\times$ time groups are in the same positions in both panels. The
striking difference between both panels lies in the scale of the within-group
dispersion of the samples: large for PLN and much smaller for ZIPLN, leading to
tighter and better separated groups. This is coherent with the observations
from \Cref{fig:microcosm_model_fit} (right panel): observed zeroes can be
captured by values of $P_{ij} \simeq 1$, without affecting $M_{ij}$, unlike PLN
where observed zeroes systematically lead to highly negative values $M_{ij}$
(mean value of $M_{ij}$ for observed zeroes: $-1.12$ with ZIPLN and $-13.6$
with PLN) therefore translating the samples in the latent space and leading to
higher variability. This is also reflected in the estimate of
$\boldsymbol{\Sigma}$ which has a much smaller volume for ZIPLN ($\log
|\boldsymbol{\Sigma}| = -206.24$, $\text{Tr}(\boldsymbol{\Sigma}) = 216.8$)
than for PLN ($\log |\boldsymbol{\Sigma}| = 689.11$,
$\text{Tr}(\boldsymbol{\Sigma}) = 15 407$) corresponding to narrower
distributions in the latent space.

\begin{figure*}[h]
\includegraphics[width=\linewidth]{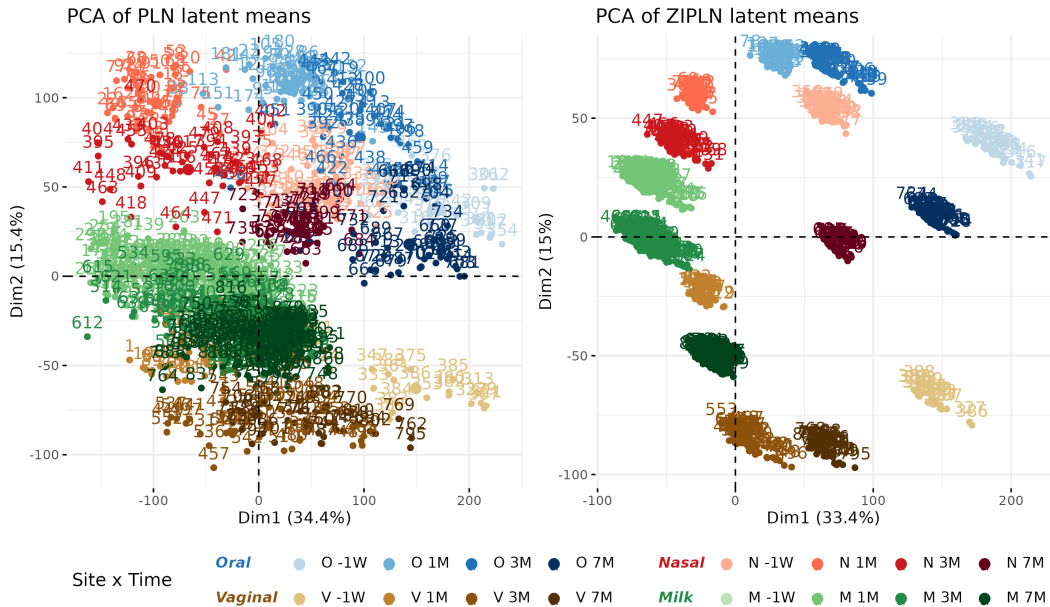}
\caption{\label{fig:microcosm_pca} PCA of the variational latent means $\matr{M}$ inferred by PLN (left) and ZIPLN (right). The samples are colored according to the site $\times$ time categories, using the same abbreviations for site and time as in \Cref{fig:microcosm_parameters}.}
\end{figure*}

\section{Conclusion and perspectives}

\subsection*{Conclusion}
In the context of analysis of high-dimensional count data, we
introduce the ZIPLN model, driven by a latent
Gaussian variable managing the structure between counts and a zero-inflated
component explaining the zeroes that fails to be explained by a standard PLN
model.
The zero-inflation is flexible as it can be fixed, site-specific,
feature-specific or depends on covariatates. We use two variational
approximations, one breaking all dependencies between features and one relying
on conditional law of counts given the observed zeroes. We compare and assess
the quality of both variational approximations on synthetic data and show the
efficiency of ZIPLN even when $90\%$ of the counts are corrupted by
zero-inflation. Our results show that the Standard VA is faster than the Enhanced VA, as expected, but also performs better or almost equivalently in terms of log-likelihood and parameter estimation, a surprise for us.
The model is motivated by an application on the structure of the microbiota of
$45$ lactating cows where $90.3\%$ of counts are null.  We show the
zero-inflation leads to massive improvements in terms of log likelihood and point out that
ZIPLN is indeed capturing zeroes that PLN can not, leading to better and more tightly
separated groups inside the microbiota dataset.\\
\subsection*{Perspectives}
\label{sec:discussion}
While this method easily scales to a couple of thousands of variables $p$ and 
even more when using GPUs, the number of parameters increases quadratically 
with $p$, due to $\matr \Sigma$,
making it difficult to scale to tens of thousands of variables.
A potential extension of this work involves employing a low-rank approximation
of the covariance matrix \citep{Chiquet2017}, enabling scalability to
higher-dimensional datasets. Adapting the PLNPCA model proposed
by \citet{Chiquet2017} to zero-inflated data would require modifying the
Gaussian latent component in \Cref{eq:ZIPLN-model} as follows:
\begin{align*}
    \boldsymbol{F}_i & \sim \mathcal{N}(\matr{0}_q, \matr{I}_q),\\
    \boldsymbol{Z}_i & = \matr{X}_i \matr{B} + \matr{C} \matr{F}_i,
\end{align*}
where $1 \leq q \leq p$ and $\matr{C} \in \mathbb{R}^{p \times q}$ represents a
low-rank loading matrix. This formulation would allow the variational
approximation to focus on the conditional distribution of the low-dimensional
latent variable $\boldsymbol{F}_i$ given the observed counts. Consequently, the
number of model parameters would be significantly reduced, scaling from
$O(p^2)$ to $O(pq)$. An implementation of this model is available in the
\texttt{pyPLNmodels} package.

An alternative approach to reduce the number of model parameters involves
imposing an $\ell_1$ penalty on the inverse covariance matrix, thereby limiting
the number of variables that are strongly correlated, as proposed by
\citet{chiquet2019variational}. Extending the ZIPLN model presented in
\Cref{eq:ZIPLN-model} to this framework requires modifying the latent Gaussian
component as follows:
\begin{align*}
    \boldsymbol{Z}_i & \sim \mathcal{N} \left(\matr{X}_i \matr{B}, \boldsymbol{\Omega}^{-1} \right), \quad \|\boldsymbol{\Omega}\|_1 \leq \lambda,
\end{align*}
where $\lambda > 0$ is a hyperparameter controlling the sparsity of the
precision matrix $\boldsymbol{\Omega}$. The zero-inflation component of the
model remains unchanged. The variational approximation would be adapted to
estimate the sparse precision matrix $\boldsymbol{\Omega}$, incorporating the
$\ell_1$ penalty to enforce sparsity. This approach results in a sparse
covariance structure, which can be estimated using a variational framework
similar to that described by \citet{chiquet2019variational}. An implementation
of this model is available in the \texttt{PLNmodels}
package.

Finally, a second-order Taylor expansion of the function \( x \mapsto \exp(-x)
\), as suggested in \cite{DLEXP}, could be considered to derive an approximate yet
analytic VE-step. This approach is expected to significantly accelerate the
optimization process.

\backmatter

\bmhead{Acknowledgements}

We would like to thank Jean-Benoist Léger for the implementation of the Lambert function and the precious advices to build the \texttt{pyPLNmodels} package.

\bmhead{Funding}

Bastien Bartardière, Julien Chiquet and François Gindraud are supported by
the French ANR grant ANR-18-CE45-0023 Statistics and Machine Learning for Single Cell Genomics (SingleStatOmics).

\bmhead{Code availability}

All the algorithms are available in the python package \texttt{pyPLNmodels} \footnote{\url{https://github.com/PLN-team/pyPLNmodels}} and the R package \texttt{PLNmodels} \footnote{\url{https://github.com/PLN-team/PLNmodels}}.

\begin{appendices}

\section{Technical proofs}
\crefalias{section}{appendix}
\subsection{ELBO (Proposition~\ref{prop:elbo})}

We recall that $\matr{M} = [\matr{M}_1, \dots,
\matr{M}_n]\tr$, $\matr{P} = [\matr{P}_1, \dots,
\matr{P}_n]\tr$, $\matr Q = \matr
1_{n,p} - \matr P$, $\matr{S} = [\matr{S}_1, \dots, \matr{S}_n]\tr$ and
$\bar{\matr{S}}^2 = \matr 1_n \tr \matr S^2$ and $\mEtun$ (resp. $\mEtdeux$) the
expectation under $\ptun$ (resp. $\ptdeux$). 
\review{We now derive the Expected LOwer-Bound for both variational approximations, defined by
$J^{(k)}(\psi,\ta) = \mEtk [\log p_\theta(\matr{Z}, \matr{W}, \matr{Y})] - \mEtk [ \log \pt_\psi^{(k)}(\matr{Z}, \matr{W})],$ where $\mEtk$ stands for the expectation with variational approximation $\pt^{(k)}_{\psi}$ $\left(k = \{1,2\}\right)$.}


We first compute the entropy term $\mEtk \left[\log \pt_{\psi}^{(k)}(\matr Z, \matr W)\right]$ for each variational distribution, starting with the standard one:
\begin{align}
\nonumber - & \mEtun \left[\log \pt_{\psi}(\matr{Z}, \matr{W})\right] \\
    & = \sum_i- \mEtun \left[\log \pt_{\psi_1}(\matr{Z}_i)\right] - \mEtun \left[\log \pt_{\psi_2}(\matr{W}_i)\right] \nonumber\\
 \nonumber & =  \frac12 \sum_i \left( p + {\matr{1}_p\tr\log\matr{S}_i^2} + p\log(2\pi) \right) \\
    & \qquad - \sum_{i,j}  \bigg[P_{ij} \log(P_{ij}) + Q_{ij}\log(Q_{ij})\bigg]. \label{eq:entropy_un}
\end{align}

For the Enhanced approximation, \review{we decompose $\mEtdeux \left[\log \pt_{\psi}\left(\matr{Z}, \matr{W}\right)\right]$ into two terms:} 
\begin{equation}
\mEtdeux \left[\log \pt_{\psi}\left(\matr{W}\right)\right] + \mEtdeux \left[\log \pt_{\psi}\left(\matr{Z}| \matr{W}\right)\right].
    \label{eq:entropy_deux}
\end{equation}
\review{The first term corresponds to the entropy of the Bernoulli distribution, i.e, $\sum_{i,j}  \left[P_{ij} \log(P_{ij}) + Q_{ij}\log(Q_{ij})\right]$}.  For the second term, denoting $ \sumzero \triangleq \sum_{1 \leq i \leq n,1 \leq j \leq p, Y_{ij} = 0}$, $ \sumun \triangleq \sum_{1 \leq i \leq n,1 \leq j \leq p, Y_{ij} >
0}$, $\mathcal N \left(x; \mu, \sigma^2 \right)$ the density of a Gaussian with mean $\mu$ and variance $\sigma^2$ evaluated at $x \in \mathbb R$ and remarking that $P_{ij} = 0$ whenever $Y_{ij} = 0$, we get
\begin{align*}
    \mEtdeux &  \big[\log \pt_{\psi}\left(\matr{Z}| \matr{W}\right)\big] = \\
     & \quad \sumzero\mEtdeux\Big[W_{ij} \log(\mathcal N (Z_{ij}; \matr{x}_i\tr \matr B_j, \Sigma_{jj})) \\
     & \quad + (1 - W_{ij}) \log(\mathcal N (Z_{ij}; M_{ij}, S_{ij}^2)) \Big] \\
   & \quad + \sumun   \mEtdeux \left[\log(\mathcal N (Z_{ij}; M_{ij}, S_{ij}^2))| W_{ij} = 0 \right] \\
   & = - \sumzero P_{ij}\left( \frac{\log(\Sigma_{jj})} 2 - \log(|S_{ij}|)  \right) \\
   & \quad - \sum_{i,j} \left(\log(|S_{ij}|)\right) - \frac{np}{2}\log(2\pi e). \\
   & = - \frac{1}{2}\sum_{ij} P_{ij} \log(\Sigma_{jj}) - \sum_{ij} Q_{ij} \log(|S_{ij}|) \\
   & \quad - \frac{np}{2}\log(2\pi e)
\end{align*}

Now, the complete data log-likelihood of the ZIPLN regression model is given by
\begin{align}\label{eq:log_complete_pln}
    & \log p_\theta(\matr{Z}, \matr{W}, \matr{Y}) = \\
    & \, \log p_\theta(\matr{W}) + \log p_\theta(\matr{Y} | \matr{Z}, \matr{W}) +  \log p_\theta(\matr{Z}) \nonumber \\
    & \, = \sum_{i,j} W_{ij} \matr{x}_i^{0\tr} \matr{B}^0_j - \log(1 + e^{\matr{x}_i^{0\tr} \matr{B}^0_j}) + W_{ij} \delta_{0, \infty}(Y_{ij}) \nonumber \\
    & \, +  (1-W_{ij}) \left( Y_{ij}(o_{ij}  +Z_{ij}) - e^{o_{ij} + Z_{ij}} - \text{cst.} \right) \nonumber  \\
    & \, -\frac{1}{2} \sum_{i=1}^n \left( \|\matr{Z}_i - \matr{x}_i\tr \matr{B}\|_{\matr \Omega}^2 - \logdet{\matr \Omega}  + p \log(2\pi) \right)\nonumber 
\end{align}
We start with computing the expectation $\mEtk \left[\log p_\theta(\matr Z)\right]$, set $\matr{H}_i = \matr{Z}_i -  \matr{x}_i\tr \matr{B}$
and denotes $H_{ij}$ its $j^{\text{th}}$ coordinate. Under the Enhanced approximation we have
 $$H_{ij} | W_{ij} \sim \mathcal N(h_{ij}, s_{ij}^2)^{1 - W_{ij}}\mathcal N(0, \Sigma_{jj})^{W_{ij}}$$
 with $h_{ij} = M_{ij} - \matr{x}_i\tr \matr{B}_j$. Now we have
\begin{equation*}
   \mEtdeux\left[\|\matr{Z}_i - \matr{x}_i\tr \matr{B}\|_{\matr \Omega}^2\right] =\sum_{k=1}^p \sum_{l=1}^p \mEtdeux \left[H_{ik}  \Omega_{kl} H_{il}   \right].
\end{equation*}
Since the $\left(H_{ik}\right)_{1 \leq k \leq p}$ are independent, for all $ 1
\leq k,l \leq p$ we have
$$
\mEtdeux \left[ H_{ik} H_{il} \right] =
\begin{cases}
\mEtdeux\left[ H_{ik} \right]\mEtdeux\left[ H_{il} \right] & \text{if } k \neq l \\
\mEtdeux\left[ H_{ik} \right]^2 + \widetilde{\mathbb V}^{^{(2)}}\left[ H_{ik} \right] & \text{if } k = l,
\end{cases}
$$
where $\widetilde{\mathbb V}^{^{(2)}}$ denotes the variance under the Enhanced approximation.
Furthermore, $\mEtdeux\left[ H_{ij} \right] = \mEtdeux\left[\mEtdeux\left[
H_{ij} | W_{ij} \right]\right] = h_{ij} (1- P_{ij})$. Using the law of total
variance, we also have
\begin{align*}
    & \widetilde{\mathbb V}^{^{(2)}} \left[H_{ij} \right] \\
    & \, = \widetilde{\mathbb V}^{^{(2)}}\left[ \mEtdeux\left[H_{ij} | W_{ij} \right] \right] + \mEtdeux\left[ \widetilde{\mathbb V}^{^{(2)}}\left[H_{ij} | W_{ij} \right] \right]  \\
    & \, = \widetilde{\mathbb V}^{^{(2)}}\left[ h_{ij} (1 - W_{ij}) \right] + \mEtdeux\left[ s_{ij}^2 (1 - W_{ij}) + \Sigma_{jj} W_{ij} \right] \\
    & \,  = h_{ij}^2 P_{ij} Q_{ij} + s_{ij}^2 Q_{ij} + \Sigma_{jj} P_{ij}.
\end{align*}
This gives
\begin{align}
    & \mEtdeux \left[ \| \matr H_i  \|^2_{\Omega} \right] = \sum_{k=1}^p \sum_{l=1}^p \mEtdeux \left[H_{ik}  \Omega_{kl} H_{il}   \right] \nonumber\\
    & \, = \sum_{k=1}^p \sum_{l=1}^p \Omega_{kl} \mEtdeux \left[H_{ik}\right] \mEtdeux \left[H_{il}   \right] + \sum_{j=1}^p \Omega_{jj} \widetilde{\mathbb V}^{^{(2)}} \left[H_{ij}\right]\nonumber \\
    & \, = \mEtdeux \left[ \matr H_i \right]\tr \matr\Omega \mEtdeux \left[ \matr H_i \right]+ \diag(\matr \Omega)\tr \diag(\widetilde{\mathbb V}^{^{(2)}} \left[\matr H_{i}\right]) \nonumber \nonumber  \\
    & \, = \left\|(\matr M_i - \matr x_i \tr \matr B) \odot \matr Q_i \right\|_{\matr \Omega}^2  \nonumber\\
    & \quad + \diag(\matr \Omega)\tr \Big( \matr P_i \odot \matr Q_i \odot (\matr M_i - x_i\tr \matr B )^2 \nonumber \\
    & \qquad + \matr Q_i\odot \matr s_i^2 + \diag(\matr \Sigma) \odot \matr P_i\Big), \label{eq:norm_H_2}
\end{align}
where $\matr Q_i = \matr 1_p - \matr P_i$. For the Standard approximation, a similar argument applies but  $\mEtun\left[ H_{ij}\right] =
M_{ij}$ and $\widetilde{\mathbb V}^{(1)}\left[ H_{ij}\right] =  s_{ij}^2$, giving the much simpler term
\begin{align} \label{eq:norm_H_1}
    \mEtun & \left[\|\matr{Z}_i - \matr{x}_i\tr \matr{B}\|_{\matr \Omega}^2\right] \nonumber \\
    & = \left\| \matr M_i - \matr x_i^{\tr} \matr B \right\|_{\matr \Omega}^2 + \diag(\matr \Omega) \tr \matr s_i^{2}.
\end{align}
As both $\mEtun$ and $\mEtdeux$ coincides on the remaining terms of the complete data
log-likelihood, we drop the index and denote $\tilde{A}_{ij} = \mEt
\left[\exp\left(o_{ij} + Z_{ij}\right)\right] =
\exp\left(o_{ij} + M_{ij} + S^2_{ij}/2\right)$. Taking the variational expectation gives
\begin{align}
& \mEt \big[\log p_\theta\big(\matr Y | \matr{Z},\matr{W} \big) \big] = \sum_{i, j} P_{ij}\delta_{0, \infty}(Y_{ij}) \label{eq:esp_loglike1} \\
  & \, + \sum_{i, j} (1 - P_{ij})(Y_{ij}(o_{ij}+M_{ij}) - \tilde{A}_{ij} - \log(Y_{ij}!)) \nonumber
\end{align}
and
\begin{equation}
\mEt \big[\log \pta(\matr W) \big] =\sum_{i,j} P_{ij} \matr{x}_i^{0\tr} \matr{B}_j^0 - \log(1 + e^{\matr{x}_i^{0\tr} \matr{B}^0_j}) \label{eq:esp_loglike2}
\end{equation}
where all operations ($\exp$, $\log$, $\logit$, etc) are applied component-wise.
Putting \Cref{eq:esp_loglike1,eq:esp_loglike2,eq:entropy_un,eq:norm_H_1}
together gives $\Jun$ and \Cref{eq:esp_loglike1,eq:esp_loglike2,eq:entropy_deux,eq:norm_H_2} gives $\Jdeux$. The writing in compact matrix form is left to the reader.
\subsection{Identifiability (Proposition~\ref{prop:standard-identifiability})}


We use the moments of $\matr{Y}$ to prove identifiability. Letting $A_j = \exp(\mu_j + \sigma_{jj}/2) = \mE[e^{Z_j}]$, and using results on moments of Poisson and Gaussian  distributions,
\begin{itemize}
 \item[(i)] If $U \sim \mathcal{N}(\mu, \sigma^2)$, then $\mE[e^U] = \exp(\mu + \sigma^2/2)$,
 \item[(ii)] If $U \sim \mathcal{P}(\lambda)$ then $\mE[U] = \lambda$, $\mE[U^2] = \lambda (1 + \lambda)$, $\mE[U^3] = \lambda (1 + 3\lambda + \lambda^2)$,
\end{itemize}
we have that $\mE[(e^{Z_j})^2] = A_{j}^2 e^{\sigma_{jj}}$ and $\mE[(e^{Z_j})^3] = A_{j}^3 e^{3\sigma_{jj}}$. By the law of total expectation, we get the first three moments of $Y_j$:
\begin{align*}
 \mE[Y_j] & = (1 - \pi_j) A_j \\
 \mE[Y_j^2] & = (1 - \pi_j) A_j [1 + A_j e^{\sigma_{jj}}] \\
 \mE[Y_j^3] & = (1 - \pi_j) A_j [1 + 3A_j e^{\sigma_{jj}} + A_j^2 e^{3\sigma_{jj}}]
\end{align*}
In parallel, using the conditional independence of $Y_j$ and $Y_k$ ($j \neq k$) knowing $W_j, W_k$, the independence of $W_j, W_k$ and the law of total covariance, we have
$$
\Cov(Y_j, Y_k) = (1 - \pi_j)(1 - \pi_k) A_j A_k (e^{\sigma_{jk}} - 1) .
$$
Using arithmetic manipulations of the moments, we have
\begin{align*}
A_j e^{\sigma_{jj}} & = \frac{\mE[Y_j^2]}{\mE[Y_j]} - 1 = \frac{\mE[Y_j^2] - \mE[Y_j]}{\mE[Y_j]}\\
e^{\sigma_{jj}} & 
= \frac{\mE[Y_j^3] - 3\mE[Y_j^2] + 2\mE[Y_j]}{(\mE[Y_j^2] - \mE[Y_j])^2 / \mE[Y_j]} \\
e^{\mu_j} & = A_j e^{-\sigma_{jj}/2} = A_j e^{\sigma_{jj}}e^{-3\sigma_{jj}/2}\\
& = \frac{(\mE[Y_j^2] - \mE[Y_j])^{4}}{\sqrt{(\mE[Y_j^3] - 3\mE[Y_j^2] + 2\mE[Y_j])^3 \mE[Y_j]}} \\
e^{\sigma_{jk}} & = 1 + \frac{\Cov[Y_j, Y_k]}{\mE[Y_j]\mE[Y_k]} \\
\pi_j & = 1 - \mE[Y_j]/A_j = 1 - \frac{\mE[Y_j]e^{\sigma_{jj}}}{A_je^{\sigma_{jj}}} \\
& = 1 - \frac{\mE[Y_j]^3(\mE[Y_j^3] - 3\mE[Y_j^2] + 2\mE[Y_j])}{(\mE[Y_j^2] - \mE[Y_j])^3}
\end{align*}
Hence, each coordinate of $\matr{\theta}$ can be expressed as a simple
functions of the (first three) moments of the distribution $p_{\matr{\theta}}$
of $Y_j$ and thus $p_{\matr{\theta}} = p_{\matr{\theta}'} \Rightarrow
\matr{\theta} = \matr{\theta}'$.

\begin{proposition}[Derivatives]
    \label{prop:J-gradient-mod-param-1} $\Jun$ has the following first
    order partial derivatives.
    \begin{align*} \label{eq:first-order-derivatives-mod-param-1} \begin{split}
        \frac{\partial \Jun}{\partial \matr{\Omega}} & = \frac{n}{2}
        \matr{\Omega}^{-1} - \frac12 \left[ (\matr{M} -
        \matr{XB})\tr(\matr{M} - \matr{XB}) + \bar{\matr{S}}^2 \right],
        \\
    \frac{\partial  \Jun}{\partial  \matr{B}}  & = \matr{X}\tr\matr{X}
\matr{B} \matr{\Omega}  - \matr{X}\tr \matr{M}\matr{\Omega},\\
\frac{\partial  \Jun}{\partial  \matr{B}^0} & =
                \matr{X}^{0\tr} \matr{P} - \matr{X}^{0\tr} \left(
                \frac{\exp(\matr{X}^0\matr{B}^0)}{1 +
            e^{\matr{X}^0\matr{B}^0}}\right), \\
\frac{\partial \Jun}{\partial \matr{M}} & =
                    (\matr{1}_{n,p} - \matr{P})\odot [\matr{Y} - \matr{A}] -
                    (\matr{M} - \matr{XB}) \matr{\Omega}, \\
                    \frac{\partial  \Jun}{\partial  \matr{S}}  & = \matr{S}^{\oslash}
            - (\matr{1}_{n,p} - \matr{P}) \odot \matr{S} \odot \matr{A} -
        \matr{S} \Diag(\matr{\Omega}), \\
\frac{\partial \Jun}{\partial \matr{\matr P}} & = \matr P \odot (\matr A +  \matr X^0\matr B^0 - \logit(\matr P))    - \log(1- \matr P),
\end{split}
\end{align*}
where $\matr{S}^{\oslash}$ denotes $\frac 1 {\matr S}$ where the division is applied component-wise.
\end{proposition}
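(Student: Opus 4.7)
The plan is to differentiate the closed-form expression of $\Jun$ given in \Cref{prop:elbo} (equation \Cref{eq:ELBOMF}) term by term, using standard matrix calculus. Before starting, I would identify which summands involve which parameter: only the Gaussian prior terms $\frac{n}{2}\logdet{\matr{\Omega}} - \frac{1}{2}\trace(\matr{\Omega}(g(\matr{M}-\matr{XB})+\Diag(\bar{\matr{S}}^2)))$ depend on $\matr{\Omega}$ and $\matr{B}$; only the Bernoulli log-likelihood $\trace(\matr{P}\tr\matr{\mu}_0 - \matr{1}_{n,p}\tr\log(\matr{1}_{n,p}+e^{\matr{\mu}_0}))$ depends on $\matr{B}^0$ through $\matr{\mu}_0 = \matr{X}^0\matr{B}^0$; the variational means/variances $(\matr{M},\matr{S})$ appear in the Poisson term (through $\matr{A}=\exp(\matr{O}+\matr{M}+\matr{S}^2/2)$), in $g(\matr{M}-\matr{XB})$, in $\bar{\matr{S}}^2$ and in the Gaussian entropy; and $\matr{P}$ enters the Poisson and Bernoulli conditional expectations plus the entropy $H(\matr{P})$.

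For $\matr{\Omega}$ and $\matr{B}$, I would use the identities $\partial\logdet{\matr{\Omega}}/\partial\matr{\Omega} = \matr{\Omega}^{-\top}$ (symmetric, so $=\matr{\Omega}^{-1}$) and $\partial \trace(\matr{\Omega}\matr{K})/\partial\matr{\Omega} = \matr{K}\tr$ to get $\partial\Jun/\partial\matr{\Omega}$; for $\matr{B}$ I would expand $g(\matr{M}-\matr{XB}) = (\matr{M}-\matr{XB})\tr(\matr{M}-\matr{XB})$ and apply $\partial\trace(\matr{B}\tr\matr{C}\matr{B}\matr{A})/\partial\matr{B}$ type identities with $\matr{C}=\matr{X}\tr\matr{X}$ and $\matr{A}=\matr{\Omega}$. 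For $\matr{B}^0$ the computation reduces to a standard logistic-regression gradient: differentiating $\log(1+e^{\matr{X}^0\matr{B}^0})$ entrywise yields the sigmoid, giving the stated form. For $\matr{M}$ and $\matr{S}$, I would combine (i) the Poisson term $\trace(\matr{Q}\tr(\matr{Y}\odot(\matr{O}+\matr{M})-\matr{A}))$, whose gradient introduces the Hadamard factor $\matr{Q}=\matr{1}_{n,p}-\matr{P}$ and the chain rule $\partial\matr{A}/\partial\matr{M} = \matr{A}$, $\partial\matr{A}/\partial\matr{S} = \matr{S}\odot\matr{A}$; (ii) the quadratic term $-\frac{1}{2}\trace(\matr{\Omega}\, g(\matr{M}-\matr{XB}))$ contributing $-(\matr{M}-\matr{XB})\matr{\Omega}$ for $\matr{M}$ and $-\matr{S}\,\Diag(\matr{\Omega})$ for $\matr{S}$; and (iii) the entropy $\frac{1}{2}\trace(\matr{1}_{n,p}\tr\log\matr{S}^2)$, whose gradient in $\matr{S}$ gives $\matr{S}^{\oslash}$.

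For $\matr{P}$, I would regroup every summand in $\Jun$ that depends on $\matr{P}$ (or $\matr{Q}=\matr{1}_{n,p}-\matr{P}$): the Poisson conditional expectation, the Bernoulli term $\trace(\matr{P}\tr\matr{\mu}_0)$, and the binary entropy $H(\matr{P}) = -\trace(\matr{P}\tr\log\matr{P} + \matr{Q}\tr\log\matr{Q})$. Entrywise differentiation (using $\partial H/\partial P_{ij} = \logit(Q_{ij}) = -\logit(P_{ij})$) leads to the announced formula; the $\delta_{0,\infty}$ contribution forces $P_{ij}=0$ whenever $Y_{ij}>0$ and so only the $Y_{ij}=0$ entries matter. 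The main obstacle in the whole calculation is purely bookkeeping: correctly handling the Hadamard products (particularly for terms of the form $\trace(\matr{Q}\tr\matr{A})$ when $\matr{A}$ itself depends on the differentiation variable), distinguishing diagonal versus full-matrix contributions from $\Diag(\bar{\matr{S}}^2)$, and remembering that $\matr{Q}$ is an affine function of $\matr{P}$ so that $H(\matr{P})$ contributes a $\logit$ term rather than a pure $\log$. Once this bookkeeping is done, each identity follows directly from the matrix calculus rules listed above.
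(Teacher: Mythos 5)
Your plan --- differentiate the closed-form ELBO of \Cref{prop:elbo} term by term with standard matrix calculus, after sorting out which summands involve which parameter --- is exactly the route the paper takes: it offers no explicit proof of this proposition and treats the derivatives as routine consequences of \Cref{eq:ELBOMF} (the main text even refers the reader to these null-gradient conditions for the updates of \Cref{prop:updates}). So there is no methodological gap. Two cautions, however, on your claim that the computation ``leads to the announced formula''. For $\matr B$, a correct execution of your own recipe gives $\partial\Jun/\partial\matr B = \matr X\tr(\matr M - \matr X\matr B)\matr\Omega$, i.e.\ the negative of the printed line; the stationary point, hence $\widehat{\matr B}$, is unaffected, so this is a sign slip in the statement rather than in your method, but you should report the sign you actually obtain. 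More substantively, for $\matr P$ your entrywise differentiation (with $\partial H(\matr P)/\partial P_{ij} = -\logit(P_{ij})$) yields, on entries with $Y_{ij}=0$, the derivative $A_{ij} + (\matr X^0\matr B^0)_{ij} - \logit(P_{ij})$, whereas the printed expression $\matr P\odot(\matr A + \matr X^0\matr B^0 - \logit(\matr P)) - \log(\matr 1_{n,p}-\matr P)$ is the $\matr P$-dependent part of $\Jun$ itself (the quantity quoted in the concavity argument of \Cref{prop:updates}), not its derivative; note that your derivative is precisely what gives $\widehat{\matr P} = \logit^{-1}(\matr A + \matr X^0\matr B^0)\times\delta_0(\matr Y)$, so your computation is the right one --- just do not assert literal agreement with the displayed line. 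Finally, a small bookkeeping misattribution: the $-\matr S\,\Diag(\matr\Omega)$ term in $\partial\Jun/\partial\matr S$ comes from $-\tfrac12\trace\left(\matr\Omega\,\Diag(\bar{\matr S}^{2})\right)$, not from $g(\matr M-\matr X\matr B)$, which does not depend on $\matr S$; you acknowledge the $\Diag(\bar{\matr S}^{2})$ term elsewhere, so this is easily fixed.
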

\section{Additional simulations}
\label{sec:appendixB}
\paragraph*{Reconstruction error and computation time}
We define the reconstruction error as the RMSE between $\widehat {\matr Y}$ and $\matr Y$ where $\widehat{\matr Y}$ is defined component-wise as
\begin{align*}
    \widehat{Y}_{ij} & = \tilde{\mathbb E} \left[\left(1 - W_{ij}\right)\exp(Z_{ij}) \right] \\
    & =
    \left( 1 - P_{ij}\right)\exp( O_{ij} + M_{ij} + S_{ij}^2 / 2). \\
\end{align*}
We maintain the experimental protocol outlined in Section
\ref{sec:expdetails}, but exclude the Oracle PLN candidate from consideration, 
as its RMSE is computed on a different dataset, making it incomparable to the 
other candidates. Setting $\gamma$ to $\gamma = 2$, following the procedure in 
Section \ref{ssec:pi_fluct}, and fixing the zero-inflation parameter $\matr 
\pi^{\star}$ at $0.3$, as detailed in Section
\ref{ssec:xb_fluct}, we fix the sample size to $n=500$. We then gradually
increase the number of variables $p$ by considering values in
$\Lambda={100,200,\cdots,500}$. For each dimension size in
$\Lambda$, we simulate 30 distinct parameter sets $\theta$, resulting in a
total of $30\times 5$ parameter combinations. Each algorithm is run for 1000 iterations, although convergence is typically achieved within 500
iterations. The results are presented in Figure \ref{fig:dims_computation}.

Compared to the Poisson Log-Normal (PLN) model, both the Non-Analytic VA and
Analytic VA exhibit approximately 2 times and 3 times longer computation times,
respectively. This increased computational demand is attributed to the
additional variational and model parameters required by the zero-inflated
models. The notable disparity between Analytic and Non-Analytic VA stems from
the computation of the Lambert function, which relies on a computationally
intensive fixed-point method. With 500 variables and 1000 iterations, the
computation time averages around 30 seconds, indicating a reasonable
computational burden even for several thousand variables.

Regarding the reconstruction error, all VA methods demonstrate similar performance regardless
of the model choice, revealing a consistent pattern. Notably, the PLN model
yields the best results, contrary to the findings in Figures
\ref{fig:samples_stat}, \ref{fig:poisson_stat}, and \ref{fig:proba_stat}, where
the root mean square error (RMSE) with respect to each model parameter
significantly favored the zero-inflated VA methods. This discrepancy suggests
that accurate parameter estimation is not necessary to achieve good reconstruction error, as has been widely documented in the linear regression setting \citep{raskutti2011}.

\begin{figure*}[h]
        \begin{center}
            \includegraphics[width=\linewidth]{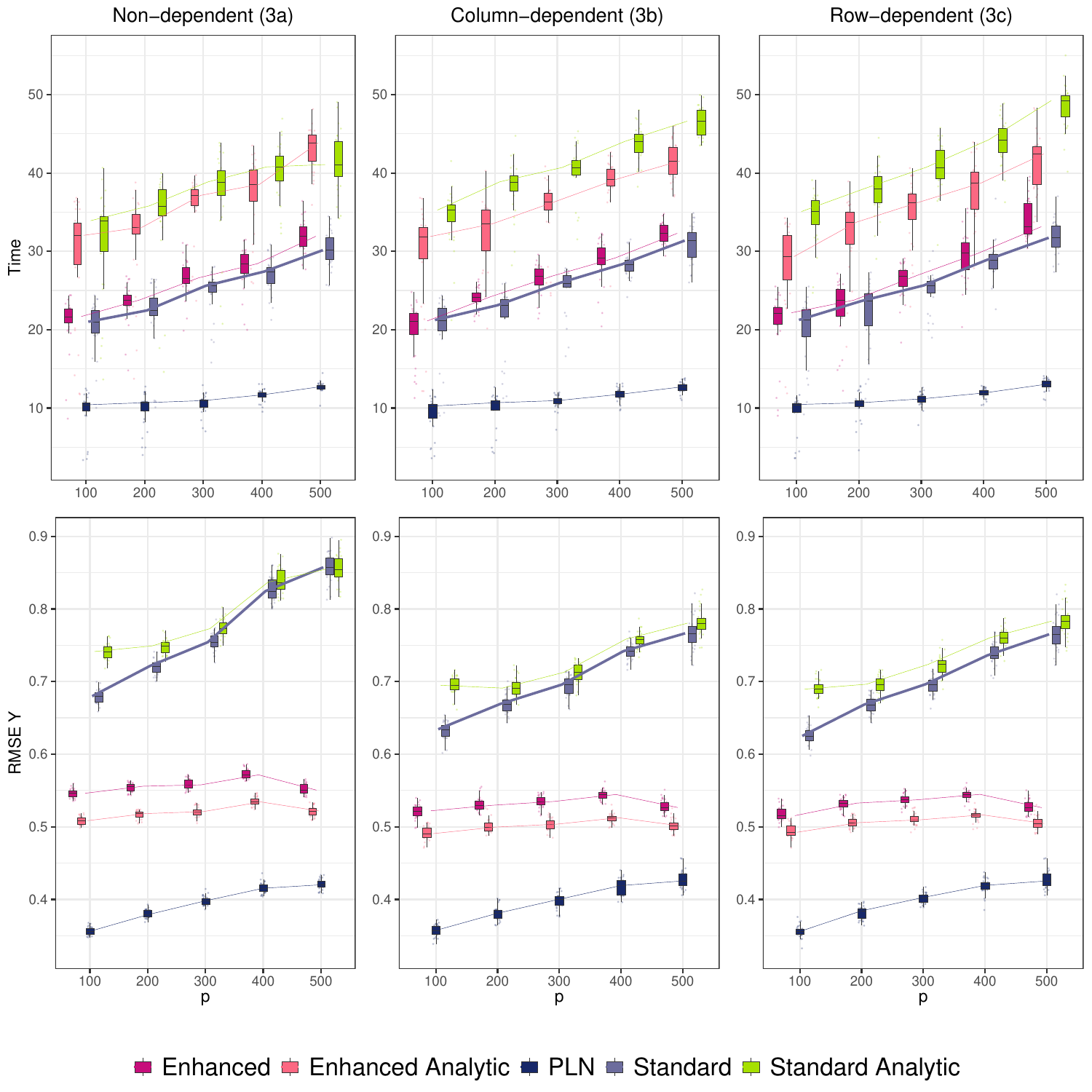}
            \caption{Computation times (in seconds) and reconstruction error (in observation units) when the number of variables $p$ increases. $1000$ iterations were performed for each algorithm.}
            \label{fig:dims_computation}
        \end{center}
\end{figure*}

\end{appendices}

\bibliography{biblio}

\end{document}